\documentclass[lettersize,journal]{IEEEtran}
\usepackage{times}
\usepackage{setspace}
\usepackage[utf8]{inputenc} 
\usepackage[T1]{fontenc}
\usepackage{url}
\usepackage{ifthen}
\usepackage{cite}
\usepackage[cmex10]{amsmath}
\usepackage{booktabs}       
\usepackage[flushleft]{threeparttable}
\usepackage{amsfonts}       
\usepackage{nicefrac}       
\usepackage{microtype}      
\usepackage{amsthm,amsmath,amsfonts, amssymb}
\usepackage{multirow}
\usepackage{graphicx}
\usepackage{xcolor}
\usepackage{mathrsfs,mathtools}
\usepackage{bm, bbm}
\usepackage[algo2e]{algorithm2e}
\usepackage{wrapfig}
\usepackage{hyperref}
\usepackage{cleveref}
\usepackage{todonotes}
\usepackage{adjustbox}

\definecolor{commentgreen}{RGB}{106,153,85}

\RequirePackage{algorithm}
\RequirePackage{algorithmic}

\theoremstyle{plain}

\newtheorem{ass}{Assumption}
\newtheorem{thm}{Theorem}

\newtheorem{cor}{Corollary}
\newtheorem{lemma}{Lemma}
\newtheorem{remark}{Remark}
\newtheorem{ex}{Example}

\newcommand*{\eg}{\emph{e.g.}{}}
\newcommand*{\ie}{\emph{i.e.}{}}

\DeclareMathOperator*{\esssup}{esssup}
\renewcommand{\d}{\mathrm{d}}   
\newcommand{\w}{w}     
\newcommand{\alg}{\texttt{ST-Score}} 

\begin{document}

\title{Score-Based Change-Point Detection and Region Localization for Spatio-Temporal Point Processes} 




\author{Wenbin Zhou, Liyan Xie, and Shixiang Zhu%
\thanks{W. Zhou and S. Zhu are with the Heinz College of Information Systems and Public Policy at Carnegie Mellon University, Pittsburgh, PA 15213 USA (e-mails: \url{wenbinz2@andrew.cmu.edu}; \url{shixiangzhu@cmu.edu}).}
\thanks{L. Xie is with the Department of Industrial Systems and Engineering at the University of Minnesota, Minneapolis, MN 55455 USA (e-mail: \url{liyanxie@umn.edu}).}
}

\maketitle

\begin{abstract}
We study sequential change-point detection for spatio-temporal point processes, where actionable detection requires not only identifying when a distributional change occurs but also localizing where it manifests in space. While classical quickest change detection methods provide strong guarantees on detection delay and false-alarm rates, existing approaches for point-process data predominantly focus on temporal changes and do not explicitly infer affected spatial regions. We propose a likelihood-free, score-based detection framework that jointly estimates the change time and the change region in continuous space-time without assuming parametric knowledge of the pre- or post-change dynamics. The method leverages a localized and conditionally weighted Hyv\"arinen score to quantify event-level deviations from nominal behavior and aggregates these scores using a spatio-temporal CUSUM-type statistic over a prescribed class of spatial regions. Operating sequentially, the procedure outputs both a stopping time and an estimated change region, enabling real-time detection with spatial interpretability. We establish theoretical guarantees on false-alarm control, detection delay, and spatial localization accuracy, and demonstrate the effectiveness of the proposed approach through simulations and real-world spatio-temporal event data.
\end{abstract}

\begin{IEEEkeywords}
Change point detection, spatio-temporal point processes, score models, region localization.
\end{IEEEkeywords}

\section{Introduction}

Change-point detection is a classic and central problem in statistical signal processing, concerned with the real-time detection of abrupt changes in the probabilistic structure of sequential observations \cite{poor-hadj-QCD-book-2008}. Given a data stream, the objective is to identify the time at which the underlying distribution deviates from its nominal regime with minimal detection delay, subject to false-alarm constraints. Such problems arise in a wide range of signal processing applications, including seismic monitoring \cite{xie2019asynchronous}, industrial process control \cite{shi2009quality}, network surveillance, and supply-chain monitoring \cite{yamin2022online}. In these settings, rapid and reliable detection is critical for enabling timely intervention and mitigating downstream impacts.

More recently, many modern signal processing systems generate \emph{event-type spatio-temporal data}, consisting of discrete events indexed jointly by time and spatial location. Such data are naturally modeled using spatio-temporal point processes \cite{reinhart2018review, daley2006introduction} and arise in applications such as earthquake catalogs \cite{omi2014estimating}, wildfire ignitions \cite{xu2023spatio}, crime events \cite{park2021investigating, zhu2022spatiotemporal}, and traffic incidents \cite{zhu2021spatio}. Existing change-point detection methods for point processes have primarily focused on temporal changes, aiming to detect shifts in event intensity or dynamics over time while treating space as either fixed or ancillary \cite{xie2019asynchronous, liu2015adaptive}. While this perspective is effective for detecting global distributional shifts, it abstracts away spatial structure that is often central to interpretation and response.

In many real-world applications, however, the significance of a change is inseparable from \emph{where} it occurs. In seismic monitoring, for instance, elevated seismicity is routinely interpreted through its spatial concentration---such as activity localized along a fault segment or within a confined region---since such patterns distinguish meaningful seismic sequences (\eg, swarm-like activity or foreshocks) from diffuse background fluctuations \cite{horalek2015earthquake}. Detecting a change in aggregate event rates without spatial localization offers limited geophysical insight and cannot support downstream analyses of fault behavior or hazard evolution. A similar consideration arises in wildfire monitoring, where changes in ignition patterns are operationally meaningful only when they can be spatially resolved, allowing agencies to identify high-risk areas, prioritize surveillance, and allocate mitigation resources \cite{romero2008gis, khan2024investigation}. These examples underscore that actionable change detection in spatio-temporal event systems requires jointly inferring when a change occurs and where it manifests, rather than treating change-point detection as a purely temporal problem.

Despite its practical importance, methods that jointly perform sequential change-point detection and localize the affected spatial region remain comparatively less developed than their purely temporal counterparts, especially for spatio-temporal point-process data. This limitation is primarily driven by two modeling and computational challenges:
($i$) In spatio-temporal point processes, a ``change'' can take multiple, non-equivalent forms, \eg, a shift in the background intensity surface \cite{wang2023sequential}, the emergence of a spatially localized increase in event activity \cite{reinhart2018review}, or a change in interaction or triggering structure \cite{dong2023non}, as in self- or mutually-exciting processes. As a result, the post-change behavior cannot be characterized by a finite set of alternatives; instead, it belongs to an infinite-dimensional family of possible post-change scenarios. In contrast, much of the existing literature on change detection and localization focuses on identifying the post-change scenario from a finite set of candidates \cite{nikiforov2003lower}. Such approaches are therefore not directly applicable in our setting.
($ii$) From an inferential standpoint, the object of interest is a latent spatio-temporal intensity (or conditional intensity) evolving in continuous space-time, while observations arrive only as sparse, irregular event times and locations. This makes fully likelihood-based sequential procedures delicate: they are often tractable only under strong parametric assumptions (\eg, Poisson/Hawkes with specified structure), and become substantially harder when the pre-/post-change dynamics are unknown, high-dimensional, or spatially heterogeneous---precisely the regimes where region identification is most valuable \cite{wang2023sequential}. 

To address these challenges, we propose a likelihood-free, score-based framework for spatio-temporal change-point detection that simultaneously identifies the change time and the affected spatial region. Our approach avoids explicit likelihood estimation and does not assume prior knowledge of the pre- or post-change distributions. Instead, it leverages the score function---the gradient of the log-density---which fully characterizes the underlying distribution and can often be estimated more reliably than the likelihood itself in high-dimensional settings \cite{zhou2025sequential}. 

At the core of our method is a localized and conditional weighted Hyv\"arinen score that quantifies the anomalousness of each observed event relative to pre-change dynamics. These event-level scores are then aggregated using a spatio-temporal CUSUM-type statistic over a prespecified class of spatial regions. Operating sequentially, the algorithm searches for regions whose cumulative scores exceed a detection threshold; when such a region is identified, the procedure outputs both a stopping time and an estimated change region, enabling real-time detection with spatial interpretability.
We establish theoretical guarantees for the proposed algorithm along three dimensions.
First, we characterize its false alarm behavior by deriving lower bounds on the average run length.
Second, we analyze the detection efficiency and obtain bounds on the detection delay, characterizing how quickly the algorithm responds after a change occurs.
Third, we provide theoretical guarantees on change-region localization accuracy, quantified using the Jaccard index between the estimated and true change regions.

Our contributions are threefold.
($i$) We introduce a continuous spatio-temporal change-region detection framework for point processes, where both the change-point and the affected spatial region are unknown, bridging the gap between classical time-only CUSUM methods and ad hoc spatial scanning approaches.
($ii$) We develop a scalable, likelihood-free detection method based on score matching that jointly detects change-points and localizes affected regions via a novel spatio-temporal CUSUM-type statistic and an alternating optimization scheme.
($iii$) We establish asymptotic guarantees linking detection delay and localization accuracy, and demonstrate the effectiveness of the proposed approach through comprehensive simulations and two real-world applications.

The remainder of the paper is organized as follows.
\Cref{sec:related} reviews related work; \Cref{sec:prelim} introduces the problem formulation and preliminaries; \Cref{sec:method} details the proposed method; \Cref{sec:theory} provides theoretical analysis of the algorithm; and \Cref{sec:exp} reports numerical results.

\section{Related Works}
\label{sec:related}

We review three lines of prior work most relevant to our study:
($i$) classical sequential change detection and multi-stream localization,
($ii$) change detection for spatio-temporal point processes,
and ($iii$) score-based methods for density and change-point modeling.

\paragraph{Classic Change Detection and Localization} The classic version of sequential change-point detection for independent and identically distributed (i.i.d.) observations with known distribution is a well-developed area. For a review, see books such as \cite{poor-hadj-QCD-book-2008,Siegmund1985,tartakovsky2014sequential} or papers such as \cite{Lai:2001,tutorial_jsait}. 
In the past decade, sequential change detection of multiple data streams has received much attention, such as detection across high-dimensional independent streams \cite{xie:2013,mei2010efficient,Wang:2015,chan2017optimal} and correlated data streams \cite{xie2020subspace,yan2018real,keshavarz2020sequential,chen2022high}. One line of work that is partially related to our study is change-point detection and localization in multi-stream settings, where changes occur in an unknown subset of data streams and the goal is to both detect the change point and identify the affected streams. In a series of works, Nikiforov introduced a minimax optimal detection-isolation algorithm for stochastic dynamical systems \cite{nikiforov1995generalized}, developed a recursive variant of the algorithm that achieves better computational efficiency \cite{nikiforov2000simple}, and provided an asymptotic lower bound for the mean detection delay with constraints on the probability of false localization and the average time before a false alarm \cite{nikiforov2003lower}. See  \cite{tartakovsky2008multidecision,tartakovsky2014sequential,tartakovsky2019sequential} for more detailed overviews. It is worth mentioning that those existing work on detection and localization typically assumes a finite set of post-change scenarios, such as changes affecting one of finitely many data streams. In contrast, we consider a continuous spatial change region and aim to identify the region jointly with change detection.

\paragraph{Spatio-Temporal Point Processes and Their Change Detection}
Point process models are widely used for modeling discrete event data due to their ability to directly characterize inter-event waiting times \cite{daley2007introduction}. In this framework, Poisson processes assume independent exponential inter-arrival times, while Hawkes processes introduce temporal dependence by allowing the conditional intensity to depend on the history of past events \cite{coevolve2015}. This self-exciting structure makes Hawkes processes particularly suitable for capturing temporal dependence and causal effects, with applications ranging from financial markets \cite{Toke2010} and earthquake modeling \cite{ogata1988statistical} to social and information networks \cite{fox2014modeling, HawkesTopic15}. Multi-dimensional Hawkes processes further extend this modeling capacity to networked settings, enabling the representation of strongly correlated event streams and signal propagation over networks \cite{reinhart2018review}.
Change detection in point process models has attracted sustained attention in both single-stream and multi-stream settings. Existing work includes classical treatments for Poisson processes \cite{shen2012change, zhang2016scanning, herberts2004optimal}, as well as more recent developments for one-dimensional \cite{Ludkovski12,pinto2015trend} and multi-dimensional or network-based point processes \cite{li2017detecting,wang2023sequential}. Notably, \cite{li2017detecting} proposed GLR-type procedures for detecting changes in networked event streams, while \cite{wang2023sequential} developed a penalized dynamic programming method for identifying parameter changes in high-dimensional self-exciting Poisson processes in an offline setting. This literature is also closely related to multisource quickest detection problems, which typically assume independence across data streams; representative examples include optimal Bayesian procedures for detecting the minimum of multiple change-points in compound Poisson models \cite{bayraktar2007quickest} and optimal rules for coupled systems governed by It\^o processes \cite{zhang2014quickest}.
Our work is related to these studies in that we assume a spatio-temporal Hawkes process event dynamic. However, our detection problem differs: we aim to detect changes jointly in time and continuous space, whereas most existing approaches focus on the discretized multi-stream settings.

\paragraph{Score-Based Methods}

The fundamental work on score matching was first introduced in \cite{hyvarinen2005estimation} as an effective method to estimate unnormalized score models.
Its key idea is to minimize the mean squared error between the score model and the true (Stein) score function \cite{hyvarinen2005estimation}, where the latter can be factored out using the integrate-by-parts technique.
However, this trick relies on the assumption that the density is supported on an unconstrained domain, which restricts its application in sequential modeling domains \cite{meng2020autoregressive}.
To address this limitation, weighted score matching was introduced \cite{hyvarinen2007some, yu2019generalized, liu2022estimating}, whose core idea is to construct reweighted densities that would meet the assumptions.
A complementary line of work studying denoising score matching \cite{vincent2011connection} reframes the estimation problem as a denoising task, where the score model attempts to predict the injected noise from the perturbed data and establishes the equivalence between it and the original score matching when the injected noise has zero variance.
Later works show that denoising score matching also provides improved numerical stability and estimation efficiency, particularly in high-dimensional or manifold-structured data, leading to a series of developments in score-based diffusion models \cite{song2019generative, song2020score, ho2020denoising}.
Recently, score-based methods have also been applied to point process modeling \cite{li2023smurf, cao2024score, dong2025conditional}, where \cite{dong2025conditional} proposed a general denoising score matching-based model for modeling the conditional PDF of point processes.
Another recent line of work studies score-based methods in change-point detection problems \cite{wu2024quickest} by adapting the Hyv\"arinen score as its detection statistics (SCUSUM). Several extensions have been developed, including the
Bayesian setting \cite{banerjee2024bayesian},
robust setting \cite{moushegian2025robust},
diffusion-based setting \cite{moushegian2025diffusion},
conditional setting \cite{chen2025conditional},
denoising score matching estimation \cite{zhou2025sequential},
and multistream setting \cite{chen2025score}.
Our work is most closely related to these studies of score-based change detection, but differs in its focus on point process event-type sequences rather than time series.

\section{Problem Setup and Preliminaries}
\label{sec:prelim}

In this section, we review spatio-temporal point processes in \Cref{sec:mtpp}, formulate the change-point detection and region localization problem in \Cref{sec:setting}, and revisit the classic CUSUM test, emphasizing its limitations in spatio-temporal settings in \Cref{sec:cusum}.

\subsection{Spatio-Temporal Point Processes}
\label{sec:mtpp}

A \emph{spatio-temporal point process} (STPP) is a stochastic model for discrete events occurring in time and space \cite{reinhart2018review}. A realization over a finite horizon $[0,T)$ is represented by the event history
\[
\mathcal{H}_T = \{x_1,\ldots,x_{N_T}\} \subseteq [0,T)\times\mathcal S \eqqcolon \mathcal X,
\]
where each event $x_j=(t_j,s_j)$ consists of a time $t_j\in[0,T)$ and a spatial location $s_j\in\mathcal S$. We use $t(x)$ and $s(x)$ to denote the temporal and spatial components of any $x\in\mathcal X$, respectively, and $N_T$ denotes the total number of events observed before time $T$.

The process can also be characterized through its counting measure $\mathbb N$, defined for any measurable subset $\mathcal X'\subseteq\mathcal X$ as
\[
\mathbb N(\mathcal X') = \lvert \mathcal H_T \cap \mathcal X' \rvert,
\]
which counts the number of events falling in $\mathcal X'$. The local dynamics of an STPP are governed by its \emph{conditional intensity function}, which specifies the instantaneous event rate at $x=(t,s)$ given the history up to time $t$:
\begin{equation}
\label{eq:intensity}
\lambda(x \mid \mathcal H_t)
= \lim_{|\mathrm d x|\to 0}
\frac{\mathbb E\left[\mathbb N(\mathrm d x)\mid \mathcal H_t\right]}{\mathrm d x}.
\end{equation} 
The conditional intensity fully characterizes the distribution of future events. In particular, letting $t_n$ denote the time of the most recent event prior to $t(x)$, the conditional density of the next event occurring at $x$ is given by
\begin{align*}
p(x \mid \mathcal H_{t(x)})
=&~
\lambda(x\mid\mathcal H_{t(x)}) \cdot \\
&~\exp\left(-\int_{[t_n,t(x))\times\mathcal S}\lambda(u\mid\mathcal H_{t(u)})\mathrm du\right).
\end{align*}

Given an observed event stream $\mathcal H_T$, the log-likelihood of a model with intensity $\lambda$ admits the following form:
\begin{equation}
\label{eq:ll}
\ell(\mathcal H_T)
= \int_{\mathcal X} \log \lambda(x | \mathcal H_{t(x)})\mathrm d\mathbb N(x) - \int_{\mathcal X} \lambda(x | \mathcal H_{t(x)})\mathrm d x,
\end{equation}
where integration with respect to the counting measure corresponds to summation over observed events, \ie,
\[
\int_{\mathcal X'} \phi(x)\mathrm d\mathbb N(x)
= \sum_{x_j\in\mathcal H_T\cap\mathcal X'} \phi(x_j)
\quad
\text{for any } \phi:\mathcal X\to\mathbb R.
\]

\subsection{Problem Setup for Spatio-Temporal Detection}
\label{sec:setting}

\begin{figure}[t!]
    \centering
    \includegraphics[width=1.0\linewidth]{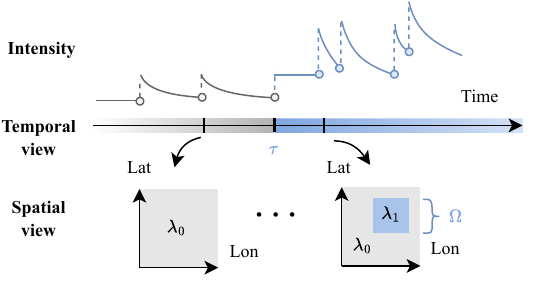}
    \vspace{-0.2in}
    \caption{Illustration of the problem setting. After some time $\tau$, the intensity function of the event generation process shifts from $\lambda_0$ to $\lambda_1$ within a confined spatial region $\Omega$. 
    }
    \label{fig:setup}
    \vspace{-0.1in}
\end{figure}

We consider a stream of events generated by an STPP with conditional intensity defined as in \eqref{eq:intensity}. Our goal is to determine {\it whether}, and {\it where}, the underlying intensity undergoes a structural change. Specifically, we seek to detect a shift from a pre-change intensity $\lambda_0(\cdot)$ to a post-change intensity $\lambda_1(\cdot)$ occurring at an \emph{unknown} change-point in time $\tau$ and confined to an \emph{unknown} spatial region $\Omega$.

Recall that $\mathcal X \coloneqq [0,T)\times\mathcal S \subseteq \mathbb R_+\times\mathbb R^2$ denotes the full spatio-temporal domain. The change, if present, is assumed to affect only the post-change region $\mathcal G_{\rm post} \coloneqq [\tau,T)\times\Omega$,
while the process remains unchanged on its complement, referred to as the pre-change region $\mathcal G_{\rm pre}\coloneqq \mathcal X\setminus\mathcal G_{\rm post}$. Under this formulation, the detection task can be cast as a hypothesis testing problem:
\begin{equation}
\label{eq:test}
\begin{aligned}
H_0: \quad
& \lambda(x\mid\mathcal H_{t(x)}) = \lambda_0(x\mid\mathcal H_{t(x)}),
\qquad \forall x\in\mathcal X,\\
H_1: \quad
& \lambda(x\mid\mathcal H_{t(x)}) = \lambda_0(x\mid\mathcal H_{t(x)}),
\qquad \forall x\in\mathcal G_{\rm pre},\\
& \lambda(x\mid\mathcal H_{t(x)}) = \lambda_1(x\mid\mathcal H_{t(x)}),
\qquad \forall x\in\mathcal G_{\rm post}.
\end{aligned}
\end{equation}
An illustration of this spatio-temporal change-point setting is given in \Cref{fig:setup}. For notational simplicity, we omit the explicit dependence on the history $\mathcal H_{t(x)}$ in our notations for the remainder of the paper.
Note that our problem setup extends naturally to marked spatio-temporal point processes \cite{hawkes1974cluster, ogata1988statistical} by augmenting the spatio-temporal domain $\mathcal{X}$ with a mark space.

The hypothesis formulation in \eqref{eq:test} is broad and encompasses many commonly used point process models. We highlight two representative examples below.

\begin{ex}[Homogeneous Poisson Process]
\label{ex:hpp}
Suppose both the pre- and post-change dynamics follow homogeneous Poisson processes, with constant intensities $\mu_0\neq\mu_1$. In this case, $\lambda_i(x)=\mu_i$ for $i=0,1$.
\end{ex}

\begin{ex}[Self-exciting Hawkes Process]
\label{ex:hawkes}
A more flexible class is given by Hawkes processes \cite{hawkes1974cluster}, whose conditional intensity before and after the change takes the form
\begin{equation}
\label{eq:hawkes}
\lambda_i(x)
=
\mu_i
+
\alpha
\sum_{x'\in\mathcal H_{t(x)}} \kappa(x,x'),
\qquad i=0,1,
\end{equation}
where $\mu_i\in\mathbb R_{\ge 0}$ is the base intensity and $\kappa:\mathcal X\times\mathcal X\to\mathbb R_{\ge 0}$ is a self-excitation kernel and $\alpha$ controls the strength of the triggering effect. A common choice is the exponential kernel,
\[
\kappa(x,x')=\beta \exp\big(-\beta\|x-x'\|\big),
\]
which captures localized temporal and spatial clustering effects; note that the kernel integrates to one over $x$.
\end{ex}

In \Cref{ex:hawkes}, even though the change is assumed to occur only in the base intensity (\ie, $\mu_0\neq\mu_1$), the self-exciting structure substantially complicates detection, as pre-change events may continue to affect future intensities through the accumulated history, yielding delayed, smoothed, or spatially propagated signatures that blur the true onset and spatial extent of the change. At the same time, this dependence on past events underpins the practical importance of Hawkes processes, since self-excitation is intrinsic to many real-world phenomena---such as seismic aftershocks \cite{ogata1988statistical}, crime and social interactions \cite{zhu2022spatiotemporal}, and information or epidemic diffusion \cite{dong2023non}---making them a canonical and interpretable model for spatio-temporal data with contagion or cascade effects.

\subsection{The Classic CUSUM Procedure}
\label{sec:cusum}

The cumulative sum (CUSUM) procedure, originally proposed in \cite{page-biometrica-1954}, is a classic and powerful method for sequential change-point detection. Its central principle is to accumulate evidence for a distributional change and to declare a change once this evidence becomes sufficiently strong. In the discrete-time setting, for a stream of observations $\{x_1,x_2,\ldots\}$, the CUSUM statistic at time $t$ is defined as
\[
W_t^{\rm CUSUM}
=
\max_{1\le k\le t}
\sum_{j=k}^t
\log\frac{p_1(x_j)}{p_0(x_j)},
\qquad t=1,2,\ldots,
\]
where $p_0(\cdot)$ and $p_1(\cdot)$ denote the pre- and post-change probability density functions, respectively. An alarm is triggered when the statistic exceeds a pre-specified threshold $\gamma$, leading to the stopping time
\[
\nu
=
\inf\bigl\{t\ge 1: W_t^{\rm CUSUM}\ge \gamma\bigr\}.
\]

In spatio-temporal point process settings, observations arrive in continuous time, and the likelihood in \eqref{eq:ll} can be evaluated at any $t\in\mathbb R_+$. When assuming spatial change region $\Omega$ is known and $\Omega = \mathcal{S}$, the discrete-time CUSUM statistic admits a natural continuous-time extension \cite{wang2023sequential}:
\begin{equation}
\label{eq:cusum-continuous}
W_t^{\rm CUSUM}
=
\sup_{0\le \tau\le t}
\bigl[
\ell_{1,\tau}(\mathcal H_t)
-
\ell_0(\mathcal H_t)
\bigr],
\end{equation}
where $\ell_{1,\tau}(\mathcal H_t)$ denotes the log-likelihood of the observed history up to time $t$ under the alternative hypothesis in \eqref{eq:test} with a hypothesized change-point $\tau$, and $\ell_0(\mathcal H_t)$ is the log-likelihood under the null hypothesis of no change.

Despite its appeal, the classic CUSUM framework faces two fundamental challenges in our STPP setting: ($i$) The change region $\Omega$ is unknown, a subset of $\mathcal{S}$, and must be inferred jointly with detection, rendering the post-change likelihood $\ell_{1,\tau}$ unavailable a priori and necessitating a search over candidate regions. ($ii$) The inclusion of a spatial dimension substantially increases the computational burden of evaluating region-specific log-likelihoods, motivating the development of new detection statistics that are both statistically sound and computationally tractable.

\section{Methodology}
\label{sec:method}

This section proposes a score-based CUSUM-type statistic for the spatio-temporal change-point detection problem in \eqref{eq:test}, building on the sequential score-based framework of \cite{zhou2025sequential}. 
Our approach combines a localized score approximation with an alternating minimization procedure to jointly detect the change-point and localize the affected spatial region.

\subsection{Score-Based Spatio-Temporal CUSUM Procedure}
\label{sec:stat}

For each observed event $x$, we define $\psi_i(x)$ as a regime-specific score that assesses how well $x$ aligns with the dynamics of regime $i\in\{0,1\}$.
Scores are evaluated on a transformed event $(t(x)-t_n,\, s(x))$, which encodes local inter-arrival dynamics.
The difference $\Delta(x) \coloneqq \psi_0(x)-\psi_1(x)$ serves as a local anomaly measure, assigning positive weight to events that are more compatible with the post-change regime than with the pre-change regime. 
The detailed construction and computation of the score functions $\psi_i(x)$ are described in \Cref{sec:score}.

This difference defines the detection statistic at time $t$ as:
\begin{equation}
\label{eq:stat}
W_t
=
\sup_{0 \le \tau \le t}
\sup_{\Omega \subseteq \mathcal S}
\int_{[\tau,t)\times\Omega}
\Delta(x)
\mathrm d\mathbb N(x).
\end{equation}
The statistic aggregates localized anomaly evidence over all candidate change-points $\tau$ and spatial regions $\Omega$.
The supremum localizes the spatio-temporal region in which the discrepancy between the two regimes is most pronounced.
In contrast to the classic CUSUM statistic in \eqref{eq:cusum-continuous}, which maximizes only over discrete time indices, \eqref{eq:stat} performs a joint optimization over continuous time and space, enabling detection at arbitrary spatio-temporal scales.

An alarm is raised once the statistic exceeds a prescribed threshold $\gamma$, yielding the stopping time
\begin{equation}
\label{eq:stopping-time}
\nu
=
\inf\{t: W_t \ge \gamma\}.
\end{equation}
At the stopping time, the maximizers $(\hat\tau,\hat\Omega)$ attaining $W_\nu$ in \eqref{eq:stat} provide estimates of the change-point and the affected spatial region.

\subsection{Localized Score Approximation}\label{sec:score}


We construct $\psi_i(x)$ using the weighted Hyv\"arinen score, a proper scoring rule that depends only on the score function $\nabla_x \log p_i(x)$ rather than the likelihood itself \cite{hyvarinen2005estimation}. This makes it particularly well-suited to spatio-temporal point processes, where likelihood-based approaches require evaluating intractable space-time integrals, and enables efficient, unnormalized modeling of event dynamics without computing normalizing constants \cite{li2023smurf}.

Formally, we define
\begin{equation}
\label{eq:hyvarinen}
\psi_i(x)
\coloneqq
\left\|
\sqrt{w(x)} \odot f_i(x)
\right\|_2^2
+
2\operatorname{div}
\bigl[
w(x) \odot f_i(x)
\bigr],
\end{equation}
where $f_i(\cdot)$ is a learned score model that captures $\nabla_x \log p_i(x)$ and $w(\cdot)$ is a weighting function.
We choose
\[
w(x)
\coloneqq
\min_{x' \in \mathcal X^c}
\|x' - x\|_\infty,
\]
which downweights regions near the boundary of the truncated domain $[0, T)$; more general constructions are provided in \Cref{app:drift}.
This weighting is essential because the support of inter-arrival times is one-sided, violating the regularity conditions required for unweighted score matching and integration-by-parts arguments \cite{liu2022estimating}.

A key challenge in estimating the score functions $f_i$ is that their computation depends on the unknown change region $\Omega$.
Recall that the conditional density of an event $x$ takes the form
\begin{equation}
\label{eq:pdf-s}
p(x)
=
\lambda(x)
\exp\left(
-\int_{[t_n,t(x))\times \mathcal{S}}
\lambda(u)\d u
\right),
\end{equation}
where the integral is taken over the entire spatial domain $\mathcal S$. 
When a change is confined to an unknown region $\Omega$, the intensity $\lambda(u)$ differs between $\Omega$ and $\mathcal S\setminus\Omega$, rendering it an unknown mixture of intensities. Hence the integral in $p_i(x)$ is infeasible to evaluate without prior knowledge of $\Omega$.

To address this challenge, we develop a localized score approximation that captures conditional intensity dynamics using only local event history, rather than modeling the global score function $f_i$ implied by \eqref{eq:pdf-s}.
This localized construction offers two key advantages:
($i$) it focuses on the most relevant nearby events, thereby preserving the dominant contributors to the local intensity; and
($ii$) by restricting attention to a small neighborhood, it avoids incorporating distant regions that may lie under a different regime, yielding a locally stable approximation that is more likely to reflect a single underlying intensity.
See \Cref{sec:theory} for formal guarantees.

Specifically, let $\delta>0$ be a fixed spatial scale.
For any event $x$, define its $\ell_\infty$ spatial neighborhood as
\begin{equation}
\label{eq:ball}
\mathcal B_\delta(x)
\coloneqq
\{ s' \in \mathcal S : \|s' - s(x)\|_\infty \le \delta \}.
\end{equation}
We then define the conditional density of $x$ under regime $i$, given its local history, as
\begin{equation}
    \label{eq:cond-pdf-loc}
    p_i(x;\delta)
    =
    \lambda_i(x)
    \exp\left(
    -\int_{[t_n,t(x))\times\mathcal B_\delta(x)}
    \lambda_i(u)\,\mathrm du
    \right),
\end{equation}
and denote the corresponding score model by $f_i(x;\delta):=\nabla_x \log p_i(x;\delta)$.
A formal derivation of \eqref{eq:cond-pdf-loc} is provided by \Cref{lem:B-cond-pdf} in the appendix.
In our implementation, each score model $f_i(\cdot;\delta)$ is parameterized by a hybrid neural architecture consisting of a recurrent component (\eg, an RNN) and a feedforward component.This representation is then concatenated with covariates of the current event and input into a feedforward network, which outputs the corresponding score value. A schematic overview of the model architecture is provided in \Cref{app:exp}.
Each score model $f_i(\cdot;\delta)$ is learned via denoising score matching \cite{dong2025conditional, zhou2025sequential} using offline reference datasets $\mathcal D_0$ and $\mathcal D_1$ drawn from the pre- and post-change regimes, respectively.
Specifically, we minimize the denoising score matching loss
\begin{equation}
\label{eq:dsm}
L(x,f)
\coloneqq
\left\|
f(x + \epsilon;\delta) + \frac{\epsilon}{\sigma^2}
\right\|_2^2,
\quad
\epsilon \sim \mathcal N(0,\sigma^2 I),
\end{equation}
where $\epsilon$ are i.i.d. sampled noise data, and $\sigma>0$ is a user-specified noise level controlling the scale of perturbations.
The resulting score estimators are obtained as
\[
f_i
=
\arg\min_{f\in\mathcal F}
\sum_{x\in\mathcal D_i}
L(x,f),
\quad i=0,1.
\]
Extensions to online settings without access to post-change reference data are discussed in Remark~\ref{sec:extend}.

\subsection{Proposed Algorithm: \alg}
\label{sec:opt}

To operationalize the proposed detection statistic in \eqref{eq:stat}, we develop an alternating optimization algorithm, referred to as \alg, that solves the resulting two-stage maximization problem.
The core idea is to decompose the joint optimization over the change-point and spatial region into two coupled subproblems, which are solved iteratively over $K$ epochs.

At iteration $k$, given the current estimate of the change-point $\hat\tau^{(k-1)}$, we first update the spatial change region by solving
\begin{equation}
\label{eq:inner}
\hat \Omega^{(k)}
=
\arg\max_{\Omega \subseteq \mathcal S}
\int_{[\hat \tau^{(k-1)},\, t)\times\Omega}
\Delta(x)\,\mathrm d\mathbb N(x).
\end{equation}
Holding the updated region $\hat \Omega^{(k)}$ fixed, we then update the change-point estimate by solving
\begin{equation}
\label{eq:outer}
\hat \tau^{(k)}
=
\arg\max_{0 \le \tau \le t}
\int_{[\tau,\, t)\times\hat \Omega^{(k)}}
\Delta(x)\,\mathrm d\mathbb N(x).
\end{equation}

After $K$ iterations, the algorithm returns $(\hat\tau^{(K)}, \hat\Omega^{(K)})$ as the final estimates of the change-point and affected spatial region.
In what follows, we describe efficient procedures for solving \eqref{eq:inner} and \eqref{eq:outer}, which we refer to as the \emph{inner optimization step} (\textit{I-step}) and the \emph{outer optimization step} (\textit{O-step}), respectively.

\paragraph{I-step}
Given $\hat \tau^{(k - 1)}$ from the previous iteration, we first present the following theoretical insight.

\begin{figure}
    \centering
    \includegraphics[width=1.0\linewidth]{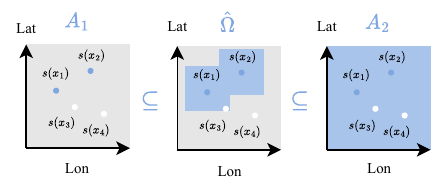}
    \vspace{-3ex}
    \caption{Illustrative example of \Cref{lem:level-set}.
    A dataset with four events $\{x_i\}_{i=1}^4$, the $x_1$ and $x_2$ have $\Delta(x_i) > 0$ while the $x_3$ and $x_4$ have $\Delta(x_i) < 0$. Their spatial coordinates are indicated by blue and white dots, respectively.
    }
    \label{fig:lemma-level-set}
\end{figure}

\begin{lemma}
    \label{lem:level-set}
    The solution to \eqref{eq:inner} is non-unique, and is given by the following family of sets:
    \[
         \hat{\Omega}^{(k)} \in \Big\{ \Omega \subseteq \mathcal{S} : A_1 \subseteq \Omega \subseteq A_2
         \Big\},
    \]
    where $A_1, A_2 \subset \mathcal{S}$ are defined as
    \begin{align*}
        A_1
        & =  \Big\{ s(x) : x \in \mathcal{H}_t \setminus \mathcal{H}_{\hat \tau^{(k-1)}}
        \ \text{and} \
        \Delta(x) > 0  \Big\}, \\
        A_2
        & =  \mathcal{S} \setminus \Big\{ s(x) : x \in \mathcal{H}_t \setminus \mathcal{H}_{\hat \tau^{(k-1)}}
        \ \text{and} \
        \Delta(x) < 0  \Big\},
    \end{align*}
\end{lemma}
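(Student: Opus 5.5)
The plan is to rewrite the objective in \eqref{eq:inner} as a finite sum and then optimize it term by term. Fix $t$ and $\hat\tau^{(k-1)}$, and write $E \coloneqq \mathcal H_t\setminus\mathcal H_{\hat\tau^{(k-1)}}$ for the finitely many events in the time window $[\hat\tau^{(k-1)},t)$. By the definition of integration against the counting measure,
\[
J(\Omega) \coloneqq \int_{[\hat\tau^{(k-1)},t)\times\Omega}\Delta(x)\,\mathrm d\mathbb N(x) = \sum_{x\in E}\Delta(x)\,\mathbbm 1\{s(x)\in\Omega\}.
\]
So $J$ depends on $\Omega$ only through which event locations it contains. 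I will also split $E$ into $E_+$, $E_-$ and $E_0$ according to the sign of $\Delta(x)$. Then $A_1=s(E_+)$ and $A_2=\mathcal S\setminus s(E_-)$.

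\textbf{Step 1: an upper bound.} Since each summand is at most $\max(\Delta(x),0)$,
\[
J(\Omega)\le \sum_{x\in E_+}\Delta(x) \eqqcolon J^\star \quad\text{for every }\Omega.
\]
Equality holds if and only if the indicator equals $1$ for each $x\in E_+$ and $0$ for each $x\in E_-$. The events in $E_0$ contribute nothing either way.

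\textbf{Step 2: characterize the maximizers.} Equality in Step 1 means: (i) $s(x)\in\Omega$ for all $x\in E_+$, i.e.\ $A_1\subseteq\Omega$; and (ii) $s(x)\notin\Omega$ for all $x\in E_-$, i.e.\ $\Omega\subseteq A_2$. Conversely, any $\Omega$ with $A_1\subseteq\Omega\subseteq A_2$ attains $J^\star$. This gives exactly the stated family of argmax sets.

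\textbf{Step 3: non-uniqueness.} The family is nonempty provided $A_1\subseteq A_2$. This needs that no location is shared by an event in $E_+$ and an event in $E_-$. I expect this to be the main obstacle, because the statement implicitly assumes it. I would justify it by noting that, for a point process with a diffuse (absolutely continuous) intensity, event locations are almost surely distinct. Non-uniqueness then follows because $A_1$ is finite while $A_2$ is $\mathcal S$ minus finitely many points. Hence $A_1\subsetneq A_2$ whenever $\mathcal S$ has more than finitely many points, so there are many, indeed uncountably many, intermediate sets.

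If a location coincidence did occur, I would remark that the same argument applies after grouping events by location and summing their $\Delta$ values. Measurability of the sets involved is immediate, since $A_1$ and the complement of $A_2$ are finite.
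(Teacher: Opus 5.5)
Your proof is correct and follows essentially the same route as the paper: both reduce the objective to a finite sum of $\Delta(x)$ over events in $[\hat\tau^{(k-1)},t)$ (the paper does this through an auxiliary function $L(s)$ and its preimage sets) and keep every positive contribution while excluding every negative one. Your version is more careful than the paper's: you state the equality condition explicitly, you flag the implicit requirement $A_1\subseteq A_2$ (no location shared by a positive and a negative event, which holds almost surely under a diffuse intensity), and you actually argue the non-uniqueness claim, none of which the paper's proof spells out.
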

As illustrated by Figure~\ref{fig:lemma-level-set}, $A_1$ is a finite collection of spatial coordinates whose corresponding anomaly measure is positive, and $A_2$ is the whole space $\mathcal{S}$ excluding a finite collection of spatial coordinates whose corresponding anomaly measure is negative.
Although \Cref{lem:level-set} states that the solution to \eqref{eq:inner} is not unique, in practice, its selection has important implications for numerical stability. To this end, we recommend the following construction:
\begin{equation}
    \label{eq:est-R}
    \hat \Omega^{(k)} \coloneqq
    \bigcup_{s \in A_1}
    \{ s' \in \mathcal{S} : \| s - s' \|_\infty \leq \delta \}
    \setminus
    A_2^c
    ,
\end{equation}
where $\delta$ is the radius of the spatial neighborhood for computing $\tilde x$.
Note that \eqref{eq:est-R} is efficient to compute, since either side of the minus sign are sets with finite elements\footnote{The left side is a finite union of balls, which is geometrically simple.}.

\paragraph{O-Step}
Given $\hat \Omega^{(k)}$, the outer problem \eqref{eq:outer} can be reduced to solving
\begin{equation}
    \label{eq:est-nu}
    \hat \tau^{(k)} = \underset{\tau \in \{ t \} \cup \{ t_j \}_{j = 1}^{N(t)} }{\arg\max} \int_{[\tau, t) \times \hat \Omega^{(k)}} \Delta(x) \d \mathbb{N}(x),
\end{equation}
since the counting measure only has jumps at a finite set of points.
This reduces the problem to a finite maximization problem, which is efficient to solve.
The complete algorithm is summarized in \Cref{alg:offline}.

\begin{remark}[Online Score Estimation]\label{sec:extend}
The proposed score estimation can be extended to settings where $\mathcal{D}_1$ is unavailable, requiring online estimation of the post-change score function. 
Specifically, we initialize both the pre- and post-change score models with the pre-change historical data $\mathcal D_0$. 
During the detection phase, for each incoming data $x_j$, along with the previously observed data stream, we estimate the change-point $\hat\tau_j^{(K)}$ and change region $\hat{\Omega}_j^{(K)}$ by running the alternating optimization algorithm (\Cref{sec:opt}) for $K$ epochs.
Then, the post-change model is updated via single-step (or multiple-step) gradient descent:
\begin{equation}
    \label{eq:grad-dec}
    \theta \leftarrow \theta - \eta \cdot \nabla_{\theta} \int_{[\hat \tau^{(K)}_j, t) \times \hat \Omega^{(K)}_j} L(x, f_1) \d \mathbb{N}(x),
\end{equation}
where $\theta$ denotes the parameters of the post-change score model $f_1$,
$\eta > 0$ is the learning rate hyperparameter,
and $L(\cdot)$ is the denoising score matching loss defined in \eqref{eq:dsm}.
Finally, the scoring rules $\psi_i(\cdot)$ and the anomaly measure $\Delta(\cdot)$ are then computed using the updated post-change model for the next incoming test data.
The pseudo code is provided in \Cref{alg:online} in the Appendix.
\end{remark}

\begin{algorithm}[t]
\caption{\alg{} Detection Procedure} 
\label{alg:offline}
\begin{algorithmic}[1]
    \REQUIRE Reference dataset  $\mathcal{D}_{0}, \mathcal{D}_{1}$;
    Test dataset stream $\mathcal{D}$;
    Optimization epochs $K$;
    Threshold $\gamma$.
    \STATE  Train score models $f_{0}$ and $f_{1}$.
    \FOR{$x_j \in \mathcal{D}$}
        \STATE Compute $\psi_0(x_j)$ and $\psi_1(x_j)$ as defined in \eqref{eq:hyvarinen}.
        \STATE $\Delta(x_j) \leftarrow \psi_0(x_j) - \psi_1(x_j)$.
    \ENDFOR
    \STATE Initialize $t \leftarrow 0$ and $j = 1$.
    \WHILE{$W_t < \gamma$}
    \STATE $t \leftarrow t(x_j)$, $j \leftarrow j + 1$
    \STATE Initialize $\hat \tau^{(0)} \leftarrow 0$.
    \FOR{$k \in \{ 1, \ldots, K \}$}
        \STATE $\hat \Omega^{(k)} \leftarrow$ Compute change region as in \eqref{eq:est-R}.
        \STATE $\hat \tau^{(k)} \leftarrow$ Compute change-point as in \eqref{eq:est-nu}.
    \ENDFOR
    \STATE Set $\hat \tau \leftarrow \hat \tau^{(K)}$ and $\hat \Omega \leftarrow \hat \Omega^{(K)}$.
    \STATE Compute spatio-temporal CUSUM statistic:
    $$
    W_t \leftarrow \int_{[\hat \tau, t) \times \hat \Omega} \Delta(x) \d \mathbb{N}(x).
    $$
    \ENDWHILE
    \STATE Set stopping time $\nu \leftarrow t$.
    \RETURN Estimation $(\hat \tau, \hat \Omega)$; Stopping time $\nu$.
\end{algorithmic}
\end{algorithm}

\section{Theoretical Results}
\label{sec:theory}

This section establishes theoretical guarantees for the proposed algorithm in terms of false alarm performance, detection delay, and change-region localization accuracy. 
For simplicity, we assume the learned score model to be exact, \ie, $f_i(x; \delta) = \nabla_x \log p_i(x;\delta)$. Note that this is possible when the score model class is sufficiently expressive and trained with ample data.
We denote $\mathbb{E}_\tau$ as the expectation under the distribution where the change happened at $\tau$.
All proofs are deferred to the appendix.

First, we analyze the false alarm performance of the proposed algorithm.
We adopt the average run length (ARL) metric, defined as the expected value of stopping time under the pre-change domain $\mathbb{E}_\infty[\nu]$. In the following \Cref{lem:far}, we establish that the average run length (ARL) is lower bounded by an exponential function of the detection threshold, up to a multiplicative constant.
Consequently, to satisfy a prescribed ARL lower-bound requirement, it suffices to set $\gamma \asymp \log(\mathrm{ARL \cdot 2^{\mu_0 \cdot T}})$.

\begin{lemma}[False alarm rate]
\label{lem:far}
    Consider the stopping rule $\nu$ defined in \eqref{eq:stopping-time}, then for any $\gamma > 0$,
    $$
    \mathbb{E}_\infty[\nu] \geq \min\left\{ O\left(\frac{e^{\gamma} }{\mathbb{E}_\infty[2^{N(T)}]}\right),
    T
    \right\}.
    $$
\end{lemma}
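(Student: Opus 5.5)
The plan is to combine a change-of-measure (Ville/martingale) argument for the score increments with a union bound over the finitely many effective candidates $(\tau,\Omega)$. The union bound is what produces the factor $2^{N(T)}$.

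\emph{Step 1: exponential supermartingale for a fixed candidate.} Fix a candidate change-point $\tau$ and region $\Omega$. Because the score models are assumed exact, $f_i=\nabla_x\log p_i(\cdot;\delta)$, and the weighted Hyv\"arinen score is proper. The key fact I will use, following the SCUSUM analysis of \cite{wu2024quickest,zhou2025sequential}, is that there is a multiplier (absorbed into $\gamma$, which is where the $O(\cdot)$ comes from) such that
\[
\mathbb E_\infty\!\left[e^{\Delta(x)}\mid\mathcal H_{t(x)}\right]\le 1
\]
under the pre-change conditional density $p_0(\cdot;\delta)$. This is exactly the role the log-likelihood ratio plays in classical CUSUM.

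Given this, the process
\[
M_t^{\tau,\Omega}=\exp\Big(\int_{[\tau,t)\times\Omega}\Delta(x)\,\d\mathbb N(x)\Big)
\]
is a nonnegative supermartingale under $\mathbb P_\infty$ with $M_\tau=1$. I would justify the one-step bound by conditioning on the local history, using the integration-by-parts identity that holds thanks to the boundary-vanishing weight $w$.

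\emph{Step 2: reduction to finitely many candidates.} This is the main obstacle, because $W_t$ takes a supremum over uncountably many $(\tau,\Omega)$. By \Cref{lem:level-set} and the O-step reduction \eqref{eq:est-nu}, the integral depends on $(\tau,\Omega)$ only through the set of observed events in $[\tau,t)\times\Omega$. Restricted to events before $T$, this is a subset of $\mathcal H_T$, so there are at most $2^{N(T)}$ distinct values.

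Hence
\[
e^{W_t}\le\sum_{S\subseteq\mathcal H_T}\exp\Big(\sum_{x\in S}\Delta(x)\Big).
\]
The subtlety is that the subsets are indexed by future randomness. I would handle this by taking expectations conditionally on $N(T)$, or by bounding the sum by $\prod_{x\in\mathcal H_T}(1+e^{\Delta(x)})$ and iterating the conditional bound $\mathbb E[1+e^{\Delta(x)}\mid\cdot]\le 2$. That iteration gives exactly $\mathbb E_\infty[2^{N(T)}]$.

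\emph{Step 3: maximal inequality and conversion to ARL.} Apply Ville's/Doob's maximal inequality to the dominating process. This gives
\[
\mathbb P_\infty(\nu< T)=\mathbb P_\infty\big(\sup_{t<T}W_t\ge\gamma\big)\le e^{-\gamma}\,\mathbb E_\infty[2^{N(T)}].
\]
Then
\[
\mathbb E_\infty[\nu]\ge T\,\mathbb P_\infty(\nu\ge T)\ge T\big(1-e^{-\gamma}\mathbb E_\infty[2^{N(T)}]\big).
\]
Splitting into cases finishes the argument. If $e^{-\gamma}\mathbb E_\infty[2^{N(T)}]\le 1/2$, the bound is of order $T$. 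Otherwise $e^{\gamma}/\mathbb E_\infty[2^{N(T)}]\le 2$ is $O(1)$, and the first term of the minimum is trivially attained, treating $T$ as at least a constant time scale. Together these give the stated $\min\{O(e^\gamma/\mathbb E_\infty[2^{N(T)}]),T\}$.

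I expect the delicate points to be making Step 1's supermartingale inequality rigorous for the weighted score, and handling the data-dependent candidate set in Step 2.
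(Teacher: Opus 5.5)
Your route differs from the paper's. The paper tilts the statistic into $\tilde W^\Omega_{u:t}$ and builds the Shiryaev--Roberts-type martingale $\int_0^t(e^{\tilde W^\Omega_{u:t}}-1)\,du$. It sums this over Voronoi-cell regions, applies optional sampling to lower-bound $\mathbb{E}_\infty[\tilde\nu]$, and only then caps at $T$. You instead bound $\mathbb{P}_\infty(\nu<T)$ and use $\mathbb{E}_\infty[\nu]\ge T\,\mathbb{P}_\infty(\nu\ge T)$, which handles the horizon more naturally; in the small case, use $\nu\ge t_1$.

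Your Step 2 reduction is actually exact. Taking $\tau=0$ and $\Omega$ equal to the locations with $\Delta>0$ gives $W_t=\sum_{x\in\mathcal H_t}\Delta(x)^+$. So $W_t$ is nondecreasing, the data-dependent candidate set is harmless, and $e^{\zeta W_t}\le\prod_{x\in\mathcal H_t}(1+e^{\zeta\Delta(x)})$. No maximal inequality is needed, since Markov at $T$ suffices. Ville could not be applied to this product anyway, because a nondecreasing process is not a supermartingale.

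The step that fails is the claim that iterating $\mathbb{E}_\infty[1+e^{\zeta\Delta(x)}\mid\cdot]\le 2$ ``gives exactly $\mathbb{E}_\infty[2^{N(T)}]$.'' Set $Y_k=\prod_{j\le k}\tfrac12(1+e^{\zeta\Delta(x_j)})$, a supermartingale in event time. The quantity you need is then $\mathbb{E}_\infty[2^{N(T)}Y_{N(T)}]$, and the iteration is valid only for a deterministic (or score-independent) number of factors. For example, let $N=2$ when $\Delta(x_1)\ge 0$ and $N=1$ otherwise, with $\mathbb{E}[e^{\zeta\Delta(x_1)}]=1$ and $\mathbb{E}[e^{\zeta\Delta(x_2)}\mid x_1]=1$. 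Then $\mathbb{E}[2^NY_N]-\mathbb{E}[2^N]=\mathbb{E}[(e^{\zeta\Delta(x_1)}-1)^+]>0$.

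Such correlation is generic here. $\Delta(x)$ depends on the inter-arrival time $t(x)-t_n$, and short inter-arrival times both raise $N(T)$ and, when $\mu_1>\mu_0$, raise $\Delta$. Conditioning on $N(T)$ does not help, because it changes the conditional law of each $x_j$ and voids the exponential-moment bound.

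What your argument does prove, by Ville applied to $Y_k$ on $\{N(T)\le n\}$, is $\mathbb{P}_\infty(\nu<T)\le\inf_n\{2^n e^{-\zeta\gamma}+\mathbb{P}_\infty(N(T)>n)\}$. Getting the factor $\mathbb{E}_\infty[2^{N(T)}]$ requires an explicit independence assumption between the count and the scores. The paper's proof makes the same decoupling when it passes from $\mathbb{E}_\infty[\tilde\nu\,2^{N(T)}]\ge e^{\gamma\zeta}$ to a bound on $\mathbb{E}_\infty[\tilde\nu]$, so you should state it as an assumption rather than claim the iteration yields it.

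Finally, integration by parts only gives $\mathbb{E}_\infty[\Delta]=-D_F(f_1\Vert f_0)<0$, a first-moment statement. The existence of $\zeta>0$ with $\mathbb{E}_\infty[e^{\zeta\Delta}\mid\mathcal H]\le 1$ uniformly over histories needs a separate exponential-moment assumption. Also, absorbing $\zeta$ turns $e^{\gamma}$ into $e^{\zeta\gamma}$, which is not a constant factor; the paper is equally loose here.
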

To quantify the detection efficiency of the proposed algorithm, we first derive the drift of the detection statistic, which characterizes the growth rate of the \alg{} statistic after the change.
Our theory shows that this term resembles a weighted and localized version of the Fisher divergence.

\begin{lemma}[Drift]
    \label{thm:drift}
    Define
    $$
    D_F(f_0 \Vert f_{1} )
    \coloneqq \mathbb{E}_{0}
    \left\|
    \sqrt{w(x)} \odot \Bigl[ f_0 (x; \delta) - f_{1} (x; \delta) \Bigr]
    \right\|_2^2,
    $$
    $$
    D_F(f_1 \Vert f_{0} )
    \coloneqq \mathbb{E}_{\infty}
    \left\|
    \sqrt{w(x)} \odot \Bigl[ f_1 (x; \delta) - f_{0} (x; \delta) \Bigr]
    \right\|_2^2,
    $$
    Then, under some assumptions\footnote{See \eqref{eq:bound-cond} of \Cref{lem:bound-cond} in the Appendix.} on $w(x)$, there is
    \begin{align*}
        \mathbb{E}_{\infty} \left[ \Delta(x) \right] 
        = - D_F(f_1 \Vert f_0),
        \quad 
        \mathbb{E}_{0} \left[ \Delta(x) \right] = D_F(f_0 \Vert f_1).
    \end{align*}
\end{lemma}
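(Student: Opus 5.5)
The plan is the classical weighted score-matching identity of Hyv\"arinen, with the weight $w$ absorbing the boundary terms. Fix a conditioning history and write $p_0(\cdot)=p_0(\cdot;\delta)$, $p_1(\cdot)=p_1(\cdot;\delta)$ for the localized conditional densities in \eqref{eq:cond-pdf-loc}. By the exactness assumption, $f_i=\nabla_x\log p_i$. Under $\mathbb{E}_\infty$ the event $x$ is distributed according to $p_0$; under $\mathbb{E}_0$ (change at time $0$, event in the post-change region) it is distributed according to $p_1$. I will prove both identities through one computation: for $q\in\{p_0,p_1\}$ and $g=\nabla\log q$, show
\[
\mathbb{E}_{q}[\psi_i(x)] = \mathbb{E}_q\bigl\|\sqrt{w}\odot(f_i-g)\bigr\|_2^2-\mathbb{E}_q\bigl\|\sqrt{w}\odot g\bigr\|_2^2 .
\]
Taking the difference over $i=0,1$ cancels the last term. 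This gives $\mathbb{E}_q[\Delta]=\mathbb{E}_q\|\sqrt{w}\odot(f_0-g)\|^2-\mathbb{E}_q\|\sqrt{w}\odot(f_1-g)\|^2$. With $q=p_0$ this is $-D_F(f_1\Vert f_0)$, and with $q=p_1$ it is $D_F(f_0\Vert f_1)$.

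The key step is the integration by parts
\[
\mathbb{E}_q\bigl[\operatorname{div}(w\odot f_i)\bigr]=\int q\,\operatorname{div}(w\odot f_i)\,\mathrm{d}x=-\int \langle \nabla q,\ w\odot f_i\rangle\,\mathrm{d}x+\text{(boundary term)},
\]
where $\nabla q = q\,g$. Once the boundary term vanishes, $\mathbb{E}_q[2\operatorname{div}(w\odot f_i)]=-2\mathbb{E}_q\langle w\odot g, f_i\rangle$. Adding $\mathbb{E}_q\|\sqrt{w}\odot f_i\|^2$ and completing the square, $\|\sqrt w\odot f_i\|^2-2\langle w\odot g,f_i\rangle=\|\sqrt w\odot (f_i-g)\|^2-\|\sqrt w\odot g\|^2$, gives the displayed identity. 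The computation is coordinatewise: $w$ acts componentwise via $\odot$, so the divergence splits into $\sum_k\partial_k(w_k f_{i,k})$ and each coordinate integrates by parts separately along its own axis, using Fubini.

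The main obstacle is the boundary term. The transformed variable $(t(x)-t_n,s(x))$ lives on a truncated domain: the inter-arrival time is one-sided in $[0,T-t_n)$, and the spatial coordinates lie in the bounded $\mathcal S$. The boundary term is $\oint q\, w_k f_{i,k}\, n_k\, \mathrm{d}S$, and it must vanish. This is exactly where the assumption referenced in the footnote (\eqref{eq:bound-cond} of \Cref{lem:bound-cond}) enters. I would state it as: $q(x)\,w(x)\odot f_i(x)\to 0$ on $\partial\mathcal X$ for $i=0,1$ and $q\in\{p_0,p_1\}$, together with integrability of $\|\sqrt w\odot f_i\|^2$ and $\operatorname{div}(w\odot f_i)$ under $q$. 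For the choice $w(x)=\min_{x'\in\mathcal X^c}\|x'-x\|_\infty$, $w$ vanishes on the boundary. The condition then reduces to $q f_i$ remaining bounded near the boundary, which holds for the exponential-type form \eqref{eq:cond-pdf-loc} when the intensities are bounded with bounded gradients. I would verify this briefly and otherwise take it as the stated assumption.

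Finally, I take the outer expectation over histories. The identities hold conditionally on $\mathcal H_{t_n}$, so the tower property gives the unconditional versions, with $D_F$ interpreted as the expectation over both the history and the event.
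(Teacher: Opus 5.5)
Your proposal is correct and follows essentially the same route as the paper's proof: weighted integration by parts via \Cref{lem:bound-cond} under the boundary condition \eqref{eq:bound-cond}, then completing the square, then cancelling the common term $\mathbb{E}_q\|\sqrt{w}\odot\nabla\log q\|_2^2$ when forming $\Delta=\psi_0-\psi_1$. Your bookkeeping (data score $\nabla\log p_0$ under $\mathbb{E}_\infty$, data score $\nabla\log p_1$ under $\mathbb{E}_0$, then the tower property over histories) is cleaner than the paper's, which plugs in $f_0$ as the data score under $\mathbb{E}_0$ and arrives at the stated signs only through inconsistent indexing.
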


It can be observed from \Cref{thm:drift} that the drift of the statistics is negative in the pre-change regime, while it is positive in the post-change regime. This confirms that the proposed scoring rule is proper and can detect changes.
Additionally, \Cref{thm:drift} can be used to derive further explicit forms of the drift with assumptions over the weight $w(x)$ and the intensity function $\lambda(x)$. For example, under the Hawkes process assumption \eqref{eq:hawkes}, we have the following general lemma.

\begin{lemma}[Score differences for Hawkes process]
    \label{lem:param}
    Suppose $\lambda(x)$ is defined as in \eqref{eq:hawkes}, then
    $$
    f_1(x; \delta) - f_0(x; \delta) = (\mu_0 - \mu_1) \cdot \left[ \frac{\alpha \boldsymbol k(x)}{\lambda_1(x) \lambda_0(x)} +
    \begin{pmatrix}
        4\delta^2 \\
        \mathbf{0}_2
    \end{pmatrix}
    \right].
    $$
    where $\boldsymbol{k}(x)$ is a vector defined as:
    $$
    \boldsymbol k(x) = \int_{[0, t_n) \times \mathcal{B}(x)} \left[ \nabla_x \kappa(x, x') + \nabla_{x'} \kappa(x, x') \right] \d \mathbb{N}(x'),
    $$
\end{lemma}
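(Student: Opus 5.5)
Since $f_i(x;\delta)=\nabla_x\log p_i(x;\delta)$ with $p_i$ given by \eqref{eq:cond-pdf-loc}, I would write
\[
f_i(x;\delta)=\nabla_x\log\lambda_i(x)-\nabla_x\int_{[t_n,t(x))\times\mathcal B_\delta(x)}\lambda_i(u)\,\mathrm du ,
\]
and compute the difference $f_1-f_0$ term by term for the Hawkes intensity \eqref{eq:hawkes}. The structural point I would exploit is that $\lambda_0$ and $\lambda_1$ share the same triggering component $\alpha\sum_{x'}\kappa(x,x')$ and differ only in the constant $\mu_i$. Consequently every contribution that does not involve $\mu_i$ cancels in the difference.

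\textbf{Step 1 (the log-intensity term).} Set $g(x)=\alpha\sum_{x'\in\mathcal H_{t(x)}}\kappa(x,x')$, so that $\lambda_i=\mu_i+g$ and $\nabla_x\log\lambda_i=\nabla_x g/(\mu_i+g)$. Then
\[
\frac{\nabla g}{\mu_1+g}-\frac{\nabla g}{\mu_0+g}=\frac{(\mu_0-\mu_1)\nabla g}{\lambda_1\lambda_0}.
\]
This already gives the factor $(\mu_0-\mu_1)\alpha/(\lambda_1\lambda_0)$ multiplying a kernel-gradient vector. The remaining task is to identify $\nabla_x g$ with the vector $\boldsymbol k(x)$, that is, $\alpha$ times the sum over the local history of $\nabla_x\kappa+\nabla_{x'}\kappa$.

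\textbf{Step 2 (the compensator term).} I would differentiate $\int_{t_n}^{t}\int_{\mathcal B_\delta(s)}(\mu_i+g(u))\,\mathrm du$ with respect to $x=(t,s)$. The $\mu_i$ part equals $\mu_i(t-t_n)|\mathcal B_\delta|=\mu_i(t-t_n)(2\delta)^2$. Its $t$-derivative is $4\delta^2\mu_i$ and its $s$-derivative is zero. The difference of these pieces across regimes therefore gives exactly $(\mu_0-\mu_1)(4\delta^2,\mathbf 0_2)^\top$, with the sign coming from the minus in front of the integral. The $g$ part of the compensator is identical under both regimes and cancels.

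\textbf{Main obstacle.} The hard step is reconciling $\nabla_x g$ with the stated $\boldsymbol k(x)$. The issue is the term $\nabla_{x'}\kappa$ and the restriction of the history to $[0,t_n)\times\mathcal B(x)$. I would first use translation invariance of the exponential kernel, under which $\nabla_{x'}\kappa=-\nabla_x\kappa$. This means the stated sum must be read as arising from the transformed coordinates $(t(x)-t_n,s(x))$ on which scores are evaluated. In those coordinates the reference point $t_n$ is held fixed, so the chain rule contributes both the direct gradient and the gradient through the shifted arguments. Separately, the restriction of the history to $\mathcal B_\delta(x)$ has to come from the localized conditional density of \Cref{lem:B-cond-pdf}, which conditions only on local events.

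I would carry out this chain rule explicitly in the transformed variables. I would then check that the combined gradient equals $\alpha\int[\nabla_x\kappa+\nabla_{x'}\kappa]\,\mathrm d\mathbb N(x')$ over the local history. Finally I would factor out $(\mu_0-\mu_1)$ to obtain the displayed identity.
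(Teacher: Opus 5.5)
Your Steps 1 and 2 coincide with the paper's proof. The paper also splits $f_i$ into $\nabla_x\log\lambda_i$ and the gradient of the localized compensator. It gets $\mu_i|\mathcal B_\delta|=4\delta^2\mu_i$ in the time entry and cancels the $\kappa$-parts of the compensator because they do not depend on $i$. It finishes with $1/\lambda_1-1/\lambda_0=(\mu_0-\mu_1)/(\lambda_1\lambda_0)$.

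\textbf{The gap.} It lies in the step you flag as the main obstacle, and the mechanism you propose for it would fail.
\begin{itemize}
\item In the coordinates $(t(x)-t_n,s(x))$, with $t_n$ and the history held fixed, $\partial_{t-t_n}=\partial_t$ and the spatial coordinates are untouched. The chain rule therefore contributes nothing beyond $\nabla_x\kappa$.
\item A purely temporal reparametrization cannot generate the spatial components $\int\nabla_{s'}\kappa\,\mathrm d\mathbb N$ of $\boldsymbol k(x)$.
\item With the atoms of $\mathbb N$ fixed, the gradient of the triggering term is, for a.e.\ $x$, just $\alpha\int_{[0,t_n)\times\mathcal B(x)}\nabla_x\kappa(x,x')\,\mathrm d\mathbb N(x')$. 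For the exponential kernel this is generically nonzero, while $\boldsymbol k\equiv 0$ (as you yourself observe from $\nabla_{x'}\kappa=-\nabla_x\kappa$). So the closing check you propose cannot succeed.
\item Invoking translation invariance would not rescue it in any case, since the lemma is stated for general $\kappa$.
\end{itemize}

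\textbf{The paper's ingredient.} The paper writes the triggering part of $\lambda_i$ directly as the local-history integral $\int_{[0,t_n)\times\mathcal B(x)}\kappa(x,x')\,\mathrm d\mathbb N(x')$. It notes that $\mathcal B(x)$ is a translate of a fixed ball, so the domain of integration moves with $x$. It then differentiates using the Reynolds transport theorem. For a translating domain, the boundary-flux term equals $\int\nabla_{x'}\kappa$, and this is exactly where $\boldsymbol k(x)$ comes from. To reproduce the lemma you need this moving-domain step; the transformed-coordinate chain rule is not a substitute.

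\textbf{A caveat on that step.} The identity $\nabla_x\int_{\mathcal B(x)}h(x,x')\,\mathrm dx'=\int_{\mathcal B(x)}[\nabla_x h+\nabla_{x'}h]\,\mathrm dx'$ is exact for Lebesgue integrals. The paper, however, applies it with $\mathrm d\mathbb N$, whose atoms do not move with $x$. Your suspicion that literal differentiation with fixed history does not produce the $\nabla_{x'}\kappa$ term is therefore well founded. The stated form of $\boldsymbol k(x)$ rests on the paper's moving-domain treatment. As written, your proposal neither adopts that treatment nor supplies a valid alternative, so the identification of $\nabla_x g$ with $\boldsymbol k(x)$ remains unproved.
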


\Cref{lem:param} further enables explicit characterization of the drift term under specific choices of the kernel function $\kappa(\cdot,\cdot)$.
In particular, when $\kappa$ is radial, the term $\boldsymbol{k}(x)$ vanishes identically, yielding a simplified expression for the drift.
We summarize the resulting forms in two representative cases:
$(i)$ a homogeneous Poisson process (HPP; \Cref{ex:hpp}) and
$(ii)$ a Hawkes process with an exponential kernel (\Cref{ex:hawkes}).

\begin{cor}
    \label{cor:kernel}
    Let $w(x) \coloneqq x$. We have the following result.
    \begin{enumerate}
        \item   
        If $\lambda(x)$ is a Poisson process (\Cref{ex:hpp}),
        \begin{align*}
            D_F(f_{1-i} \Vert f_{i})
            = \frac{4(\mu_0 - \mu_1)^2 \delta^2}{ \mu_i},
        \end{align*}
        \item If $\lambda(x)$ is a Hawkes process with exponential excitation kernel (\Cref{ex:hawkes}),
        \begin{align*}
            D_F(f_{1-i} \Vert f_{i})
            = \left[ \frac{4(\mu_0 - \mu_1)^2 \delta^2}{ \mu_i}\right](1 - \alpha).
        \end{align*}
    \end{enumerate}
\end{cor}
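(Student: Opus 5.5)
We prove Corollary~\ref{cor:kernel} by substituting the score difference of Lemma~\ref{lem:param} into the definition of $D_F$ from Lemma~\ref{thm:drift}, and then computing the weighted expectation for each of the two models. The key simplification is that, for the exponential kernel $\kappa(x,x')=\beta e^{-\beta\|x-x'\|}$, the function depends only on $x-x'$. Hence $\nabla_x\kappa(x,x')+\nabla_{x'}\kappa(x,x')=0$, so $\boldsymbol k(x)\equiv 0$. For the Poisson case, $\alpha=0$ and the first term is absent anyway. In both cases Lemma~\ref{lem:param} therefore reduces to
\[
f_1(x;\delta)-f_0(x;\delta)=(\mu_0-\mu_1)\begin{pmatrix}4\delta^2\\ \mathbf 0_2\end{pmatrix}.
\]
Only the temporal coordinate carries a difference. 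Consequently
\[
D_F(f_{1-i}\Vert f_i)=16(\mu_0-\mu_1)^2\delta^4\,\mathbb E_{i}\bigl[w_t(x)\bigr],
\]
where $w_t$ is the temporal component of the weight. With $w(x)=x$, this component is the (transformed) inter-arrival time $t(x)-t_n$. The expectation is taken under regime $i$ in the convention of Lemma~\ref{thm:drift}: $\mathbb E_\infty$ for $i=0$ and $\mathbb E_0$ for $i=1$.

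The remaining task is to compute $\mathbb E_i[t(x)-t_n]$ under the localized conditional density \eqref{eq:cond-pdf-loc}. Integrating \eqref{eq:cond-pdf-loc} over the $\ell_\infty$ neighborhood $\mathcal B_\delta(x)$, which has area $4\delta^2$, gives the total rate of the local process.

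\emph{Poisson case.} The local event rate is $4\delta^2\mu_i$, so the inter-arrival time is exponential with mean $1/(4\delta^2\mu_i)$. Multiplying by $16(\mu_0-\mu_1)^2\delta^4$ gives $4(\mu_0-\mu_1)^2\delta^2/\mu_i$, as claimed.

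\emph{Hawkes case.} Here I would use stationarity, via the branching (cluster) representation of the Hawkes process. Because the kernel integrates to one, the long-run intensity is $\mu_i/(1-\alpha)$. The expected local inter-arrival time is then approximated by the reciprocal of the mean local rate, $(1-\alpha)/(4\delta^2\mu_i)$. This yields the extra factor $(1-\alpha)$.

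I expect this last step to be the main obstacle. The expectation of the inter-arrival time equals the reciprocal of the stationary rate only in a Palm-calculus sense. This is an exact identity for stationary point processes: the mean Palm inter-event gap equals $1/\text{intensity}$. It holds here provided the localized process restricted to $\mathcal B_\delta$ is itself stationary with intensity $4\delta^2\mu_i/(1-\alpha)$. I would therefore invoke Palm inversion for the stationary Hawkes process, assuming $\alpha<1$ and that $x$ is a typical event, to justify the formula. I would also check that boundary effects from the truncated domain are negligible, which is where the weight assumption \eqref{eq:bound-cond} is used.
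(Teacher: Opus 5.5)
Your proposal is correct and follows essentially the same route as the paper: you eliminate $\boldsymbol k(x)$ via translation invariance of the kernel, reduce $f_1-f_0$ to its temporal entry $(\mu_0-\mu_1)\,4\delta^2$, and multiply its square by $\mathbb{E}_i[t-t_n]$, which equals $1/(4\delta^2\mu_i)$ in the Poisson case and $(1-\alpha)/(4\delta^2\mu_i)$ in the Hawkes case. The only difference is that the paper cites the Hawkes inter-arrival mean as a well-known result, whereas you sketch a Palm-inversion justification. That justification works cleanly under the paper's reading of $\mathcal B_\delta$ as a fixed window, as in \Cref{lem:B-cond-pdf}.
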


\Cref{cor:kernel} has two implications worth highlighting:
($i$) For HPP, the drift scales with the difference of the base rates $|\mu_0 - \mu_1|$ and the neighborhood radius $\delta$. 
This reflects that a sharper pre- and post-change intensity contrast and a larger region of observation show stronger evidence of change.
($ii$) For Hawkes processes with exponential kernels, the drift magnitude is discounted by a factor of $1- \alpha$ compared to the HPP setting. This reflects that a larger branching ratio $\alpha$, which corresponds to stronger self-excitation and increased temporal variability, renders evidence of change less pronounced, as observed deviations may be confounded by endogenous bursts or self-driven departures from stationarity.

Finally, we use \Cref{thm:drift} to derive the detection performance guarantees for the proposed algorithm. We examine two metrics: expected detection delay (EDD) \cite{li2017detecting, wang2023sequential} defined as 
$
\mathbb{E}_\tau[\nu - \tau \mid \nu > \tau ],
$
which evaluates the estimated change-point; and the Jaccard index of the estimated change region, defined as:
$
J(\hat \Omega, \Omega) = {| \hat \Omega \cap \Omega |} / { | \hat \Omega \cup \Omega |},
$
which evaluates the accuracy of the recovered change region $\hat \Omega$.


\begin{thm}
    \label{thm:performance}
    Under some regularity conditions (\Cref{app:performance}):
    \begin{enumerate}
        \item
        For a given threshold value $\gamma > 0$ and any true change-point $\tau \geq 0$, there is:
        \[
        \text{EDD} 
        \lesssim
        \min\left\{
        \frac{\gamma|\mathcal{S}|}{
        \bar \lambda_1 |\Omega|
        D_F(f_0 \Vert f_1)}(1 + o(1)), \
        T
        \right\}
        \]
        where $\bar \lambda_1$ is defined as the stationary intensity of the post-change process. 
        \item
        The Jaccard index is asymptotically lower bounded:
        \[
            \lim_{\gamma \to \infty} J(\hat \Omega, \Omega) \geq \min \left\{ \frac{|\Omega|}{|\Omega \oplus \mathcal{B}_\delta|}, \right. \left. \frac{|\Omega \ominus \mathcal{B}_\delta|}{|\Omega|} \right\},
        \]
        where $\oplus$ and $\ominus$ denote the set dilation and erosion operators (\eqref{eq:dilation} and \eqref{eq:erosion}), and $\mathcal{B}_\delta$ is defined in \eqref{eq:ball}.
    \end{enumerate}
\end{thm}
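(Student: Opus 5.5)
For part 1, I will fix a candidate region and reduce the supremum statistic \eqref{eq:stat} to a single-stream CUSUM. Since $W_t$ is a supremum over $(\tau,\Omega)$, it dominates the statistic evaluated at the true change-point and the true region:
\[
W_t \;\ge\; Z_t \coloneqq \int_{[\tau,t)\times\Omega}\Delta(x)\,\mathrm d\mathbb N(x).
\]
Consequently $\nu\le\nu_Z\coloneqq\inf\{t: Z_t\ge\gamma\}$, and it suffices to bound $\mathbb E_\tau[\nu_Z-\tau]$.

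After the change, the events falling in $[\tau,t)\times\Omega$ form a subsequence of the full post-change stream. Their long-run rate is the stationary post-change intensity restricted to $\Omega$. Under the regularity conditions (stationarity, and spatial homogeneity in the averaged sense), I take this rate to be $\bar\lambda_1|\Omega|$. The factor $|\mathcal S|$ then appears when normalizing the intensity to a per-unit-area density.

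Each such event contributes an increment $\Delta(x)$. By \Cref{thm:drift}, its conditional mean given the past is $D_F(f_0\Vert f_1)>0$. Here I rely on the localized construction \eqref{eq:cond-pdf-loc}: it guarantees that events whose $\delta$-neighbourhood lies in $\Omega$ see only the post-change regime. Boundary events, whose neighbourhood meets $\partial\Omega$, I absorb into the $o(1)$ term as $\gamma\to\infty$.

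I then write $Z_t-(\text{compensator})$ as a martingale and apply Wald's identity together with optional stopping at $\nu_Z\wedge T$. This gives
\[
\mathbb E[Z_{\nu_Z}] = D_F(f_0\Vert f_1)\,\mathbb E\bigl[\mathbb N([\tau,\nu_Z)\times\Omega)\bigr].
\]
A renewal/overshoot argument, assuming bounded $\Delta$ or uniform integrability, gives $\mathbb E[Z_{\nu_Z}]=\gamma(1+o(1))$. Converting event counts to time via the rate yields the displayed bound. The minimum with $T$ is automatic, because the horizon is finite.

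The main obstacle is the dependence in the Hawkes case. There, neither the events nor the increments are i.i.d., and the conditional drift of $\Delta$ equals $D_F$ only in stationary expectation. I would handle this by requiring, as a regularity condition, an ergodic theorem for the compensator, $\Lambda([\tau,t)\times\Omega)/(t-\tau)\to\bar\lambda_1|\Omega|$ a.s. Combined with a martingale strong law for $\sum\Delta(x_j)$, this gives the first-order delay via the standard Lai-type argument: complete convergence implies the $(1+o(1))$ EDD bound.

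For part 2, I will work on the event that detection occurs after the change, with $\hat\tau\to\tau$ as $\gamma\to\infty$. This holds because the pre-change drift is negative by \Cref{thm:drift}, so the maximizing $\hat\tau$ concentrates near $\tau$. By \Cref{lem:level-set} and the construction \eqref{eq:est-R}, $\hat\Omega$ is the union of $\delta$-balls around the post-$\hat\tau$ events with $\Delta>0$.

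For the upper containment, I need to show that events outside $\Omega\oplus\mathcal B_\delta$ have pre-change local scores. Their neighbourhoods lie in the unchanged region, so their $\Delta$ has negative mean. Controlling these events with the law of large numbers, they asymptotically do not contribute, giving $\hat\Omega\subseteq\Omega\oplus\mathcal B_\delta$ up to null sets.

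For the lower containment, events whose neighbourhood lies inside $\Omega$ (so $s\in\Omega\ominus\mathcal B_\delta$) have positive drift. As $\gamma\to\infty$ the observation window grows, these events become dense, and their balls cover $\Omega\ominus\mathcal B_\delta$. This gives $\Omega\ominus\mathcal B_\delta\subseteq\hat\Omega$.

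Combining the two containments, the Jaccard index satisfies
\[
J\ge \frac{|\hat\Omega\cap\Omega|}{|\hat\Omega\cup\Omega|}.
\]
In the case $\hat\Omega\supseteq\Omega$ this is at least $|\Omega|/|\Omega\oplus\mathcal B_\delta|$; in the case $\hat\Omega\subseteq\Omega$ it is at least $|\Omega\ominus\mathcal B_\delta|/|\Omega|$. The general case is handled by the minimum of these two. This containment case split is where I expect some care to be needed in making the bound rigorous.
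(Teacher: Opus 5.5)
\textbf{Part 2, final step.} The case split fails, and the minimum does not cover the general case. From the sandwich $\Omega\ominus\mathcal B_\delta\subseteq\hat\Omega\subseteq\Omega\oplus\mathcal B_\delta$ you only get $|\hat\Omega\cap\Omega|\ge|\Omega\ominus\mathcal B_\delta|$ and $|\hat\Omega\cup\Omega|\le|\Omega\oplus\mathcal B_\delta|$. That is, $J(\hat\Omega,\Omega)\ge|\Omega\ominus\mathcal B_\delta|/|\Omega\oplus\mathcal B_\delta|$, and this is sharp. The set $\Omega'=(\Omega\ominus\mathcal B_\delta)\cup((\Omega\oplus\mathcal B_\delta)\setminus\Omega)$ lies in the sandwich, with $\Omega'\cap\Omega=\Omega\ominus\mathcal B_\delta$ and $\Omega'\cup\Omega=\Omega\oplus\mathcal B_\delta$. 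If $0<c<a<b$ are the areas of the erosion, of $\Omega$, and of the dilation, then $c/b<\min\{a/b,\,c/a\}$. For $\Omega=[0.4,0.6]^2$ and $\delta=0.05$ this gives $1/9$ against the claimed $1/4$. So the mixed case, where $\hat\Omega$ misses part of the inner band and picks up part of the outer band, is not covered by your two nested cases. You would need a fact forcing $\hat\Omega\subseteq\Omega$ or $\hat\Omega\supseteq\Omega$, and nothing in your argument provides one. The paper's proof follows the same route (definitely-in and definitely-out regions, the sandwich family $\mathcal I$, infimum of $J$ over $\mathcal I$). It simply asserts $\inf_{\Omega'\in\mathcal I}J(\Omega',\Omega)\ge\min\{\cdot\}$, so it does not supply this step either. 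What the sandwich actually yields is the weaker bound $|\Omega\ominus\mathcal B_\delta|/|\Omega\oplus\mathcal B_\delta|$.

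\textbf{Part 2, the containments and part 1.} Your argument does not establish the containments either. By \Cref{lem:level-set} and \eqref{eq:est-R}, $\hat\Omega$ is built event by event from the sign of $\Delta(x)$, so a law of large numbers on a fixed region does not control it: a pre-change event has $\Delta(x)>0$ with positive probability even though the mean is negative. For example, take the Poisson case with weight equal to the inter-arrival gap $u$ (as in \Cref{cor:kernel}) and $\mu_1>\mu_0$. Then $\Delta(x)=|\mathcal B_\delta|(\mu_1-\mu_0)(2-u|\mathcal B_\delta|(\mu_0+\mu_1))$, which is positive with probability $1-e^{-2\mu_0/(\mu_0+\mu_1)}$ under $\lambda_0$. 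As the post-change window grows, the $\delta$-balls around such events spread over $(\Omega\oplus\mathcal B_\delta)^c$. Even if positive events were confined to $\Omega\oplus\mathcal B_\delta$, the extra $\delta$-dilation in \eqref{eq:est-R} would only give $\hat\Omega\subseteq\Omega\oplus 2\mathcal B_\delta$. The same mechanism defeats $\hat\tau\to\tau$. For fixed $\tau$ the supremum over $\Omega$ equals $\sum_{x\in\mathcal H_t,\,t(x)\ge\tau}\Delta(x)^+$, which is nonincreasing in $\tau$, so a negative pre-change drift never penalizes an early $\tau$. The paper's contradiction argument has the same weakness: it applies \eqref{eq:lln-jaccard} to a data-dependent set $A\subseteq\hat\Omega$.

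In part 1, Wald's identity needs the conditional mean of $\Delta(x_j)$ given the past to equal the constant $D_F(f_0\Vert f_1)$. This fails, because the integration-by-parts identity holds history by history with a history-dependent value. Converting counts to time in expectation also needs uniform integrability, which an a.s.\ ergodic theorem does not give. Your Lai-type fallback is the paper's route, which works directly in time:
\begin{itemize}
\item it bounds $W_t$ below by the fixed-region statistic $\tilde W_{0:t}$;
\item it takes blocks of length $t'=\gamma/((I-\varepsilon)r)$ with $r=\bar\lambda_1|\Omega|/|\mathcal S|$ (this, not $\bar\lambda_1|\Omega|$, is the rate that produces the factor $|\mathcal S|$);
\item it uses the $\esssup_{\mathcal H_\tau}$-uniform convergence in Assumption~\ref{ass:lln} to get a one-block failure probability $c<1$ uniformly in the past, which is also what handles conditioning on $\{\nu>\tau\}$;
\item it concludes $\mathbb E_1[\tilde\nu]\le t'/(1-c)$ from geometric tails, with no overshoot analysis.
\end{itemize}
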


The first statement of \Cref{thm:performance} provides an asymptotic upper bound on the EDD.
Unlike standard change-point detection results, this bound is negatively proportional to both the stationary post-change process intensity $\bar \lambda_1$ and the size of the change region.
This dependence arises because the statistic $W_t$ accumulates information through the underlying counting process, so the detection delay is governed by the rate at which post-change events are observed. 
The second statement of \Cref{thm:performance} shows that when allowing a sufficient amount of time for the statistic to accumulate evidence, the estimated change region would approximate the true change region by a difference of the localization neighborhood $\mathcal{B}_\delta$.
This gap arises because a localized region that crosses the boundary of $\Omega$ contains mixed pre-change and post-change processes, which disrupts the application of \Cref{thm:drift} in establishing clear convergence.

\section{Numerical Experiments}
\label{sec:exp}

We evaluate the performance of our proposed algorithm \alg{} on synthetic data and two real-world datasets: earthquake catalog dataset \cite{jma} and wildfire incident dataset \cite{calfire}. 
In all experiments, \alg{} is implemented with the score model parameterized by: 
($i$) a one-layer LSTM network with $32$ hidden units, which encodes event history, followed by
($ii$) a one-layer feed-forward network with $512$ hidden units, which takes as input the concatenation of the history embedding and the current observed event, outputting the score.
The detection threshold is determined by simulation to control the average run length \cite{wang2023sequential, zhou2025sequential}. For additional details, see \Cref{app:exp}.

\subsection{Simulation Results}

Synthetic event sequences are generated from a spatio-temporal Hawkes process \eqref{eq:hawkes} defined on the time range $[0, 1)$ and a unit two-dimensional spatial box $[0, 1] \times [0, 1]$.
The excitation kernel $\kappa(x,x')$ is specified as an exponential kernel function with $\beta = 0.1$ by default.
We set the true change point as $\tau = 0.5$ and the true change region as $\Omega = [0.4, 0.6] \times [0.4, 0.6]$.
The pre- and post-change base rates are set to $\mu_0 = 1 \times 10^2$ and $\mu_1 = 10 \times \mu_1$.
This configuration allows a moderate level of spatio-temporal spillover effect, while ensuring the detectability of the change.
The training data for $f_0$ and $f_1$ consists of a single stream of data generated from each of the pre- and post-change processes, respectively.

\subsubsection{Comparison with Baselines}
We compare the detection performance of \alg{} with five baselines:
($i$) CUSUM: the standard CUSUM statistic \cite{page-biometrica-1954} discussed in \Cref{sec:prelim};
($ii$) SCUSUM: the score-based CUSUM statistic introduced in \cite{wu2024quickest};
($iii$) PP CUSUM: the CUSUM statistics using an estimated point-process log-likelihood ratio as the anomaly measure defined in \eqref{eq:cusum-continuous};
($iv$) MinCUSUM($n$): the CUSUM statistics for multi-stream change detection introduced in \cite{mei2010efficient} using a $n \times n$ uniform spatial discretization as its channels.
All baselines, except PP-CUSUM, operate on preprocessed binned count data derived from the event sequences, as they cannot directly handle spatio-temporal event data.
MinCUSUM further requires a manual partition of the data into multiple channels.
Note that SCUSUM and PP CUSUM are omitted from comparing the Jaccard index, since their estimated change region is by default the entire spatial region, and would collapse to the result given by CUSUM.
These baselines are included to reflect the typical approaches used to address change-point detection under the problem setting described in \Cref{sec:setting}.

\begin{figure*}[!t]
    \centering
    \includegraphics[width=1.0\linewidth]{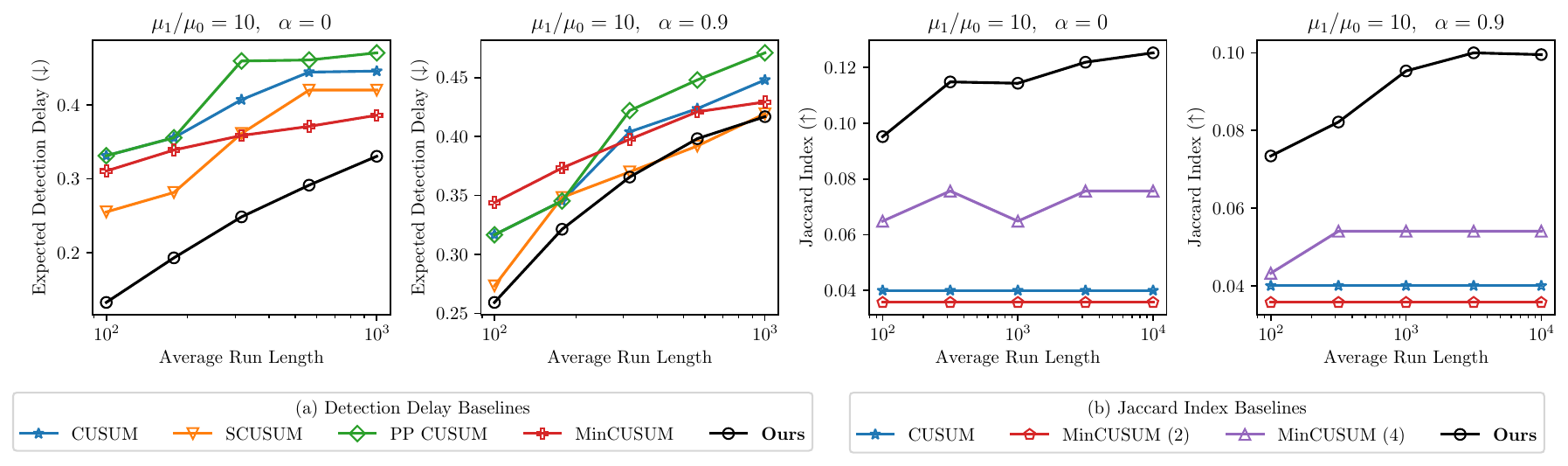}
    \vspace{-4ex}
    \caption{Detection performance versus average run length (ARL) under different settings. The first two figures are the expected detection delay versus ARL plots; the last two plots are the Jaccard index versus ARL plots. Within these sets of figures, we vary the branching ratio $\alpha$ between $0$ and $0.9$.
    }
    \label{fig:comparison}
\end{figure*}

\begin{figure}[!t]
    \centering
    \includegraphics[width=\linewidth]{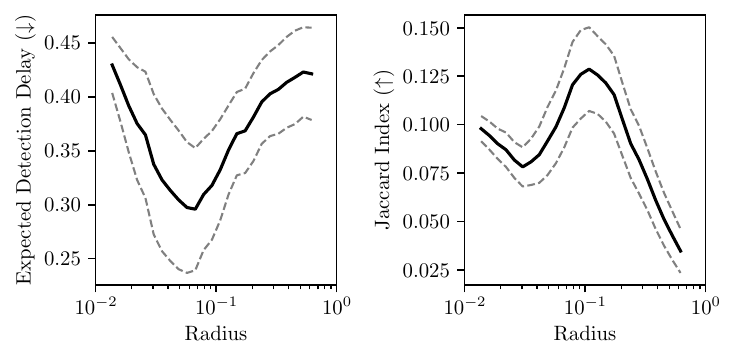}
    \vspace{-.2in}
    \caption{
    Model performance for varying neighborhood radii ($\delta$). Solid lines represent means; dashed lines represent the standard error of the mean.
    }
    \label{fig:radius}
\end{figure}

\Cref{fig:comparison} presents the performance metric versus average run length (ARL) tradeoff plot of \alg{} comparing with the above baselines.
The reported results are averaged over $100$ trials.
Across all ARL values and hyperparameter settings, \alg{} consistently shows a lower expected detection delay (EDD) and a higher Jaccard index compared to the other baselines.
This illustrates the superiority of the proposed algorithm, which benefits from a mix of the detection mechanism, statistics design, and optimization procedure we introduced earlier.
On a side note, we observe that all models achieve consistently higher EDD and lower Jaccard index under strong spatio-temporal spillover ($\alpha = 0.9$) compared to the no-spillover setting ($\alpha = 0$). This observation is consistent with \Cref{cor:kernel}, serving as additional evidence suggesting that spillover negatively impacts detection performance.
Beyond its superior performance, \alg{} operates directly on raw event sequences without the need for preprocessing, unlike most baseline methods, further highlighting its suitability for spatio-temporal change detection tasks.

\begin{figure*}
    \centering
    \includegraphics[width=\linewidth]{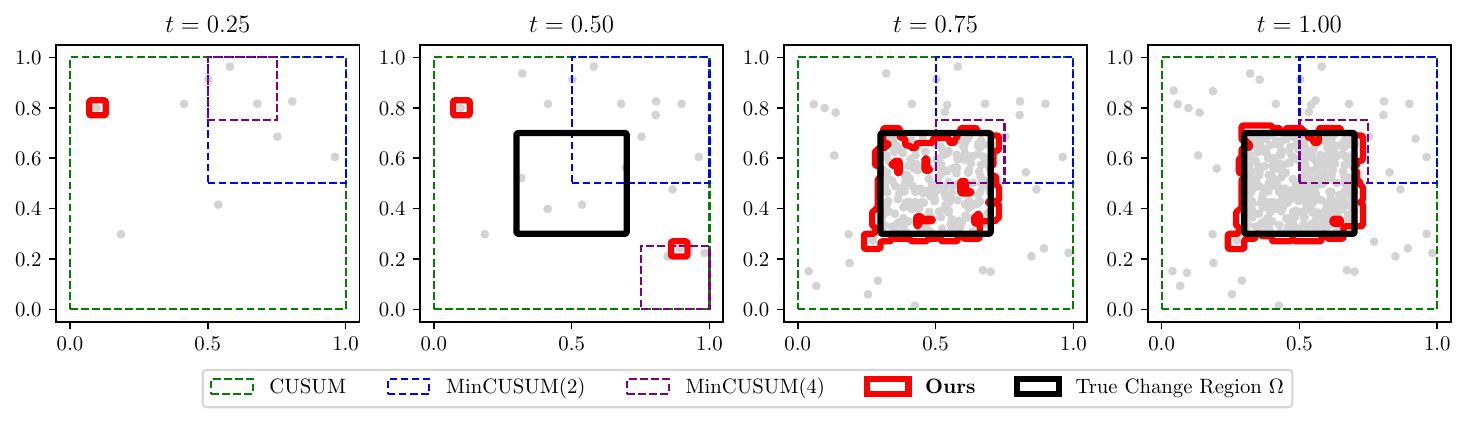}
    \includegraphics[width=\linewidth]{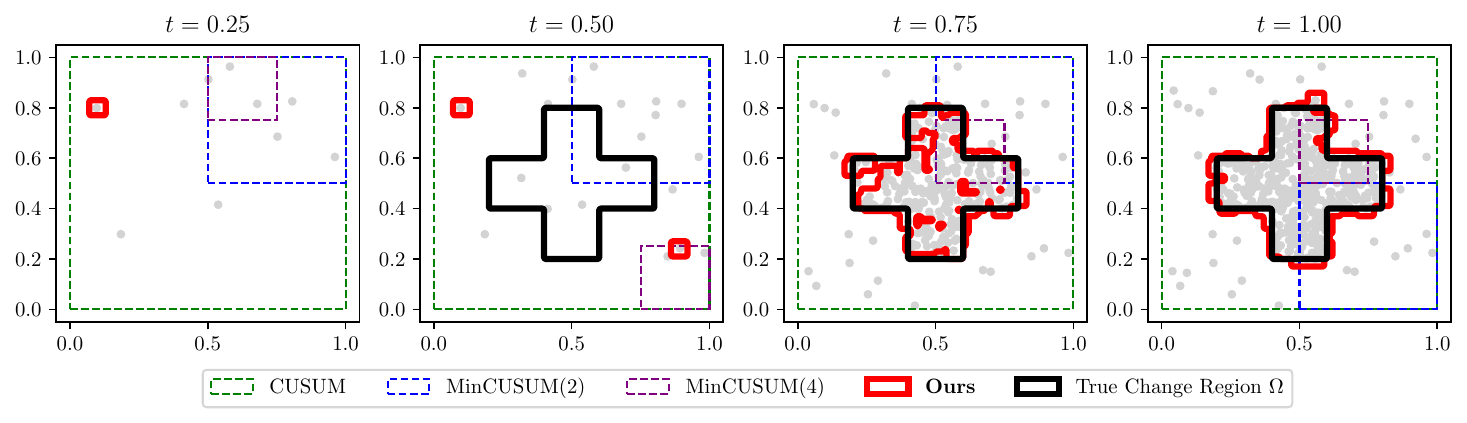}
    \includegraphics[width=\linewidth]{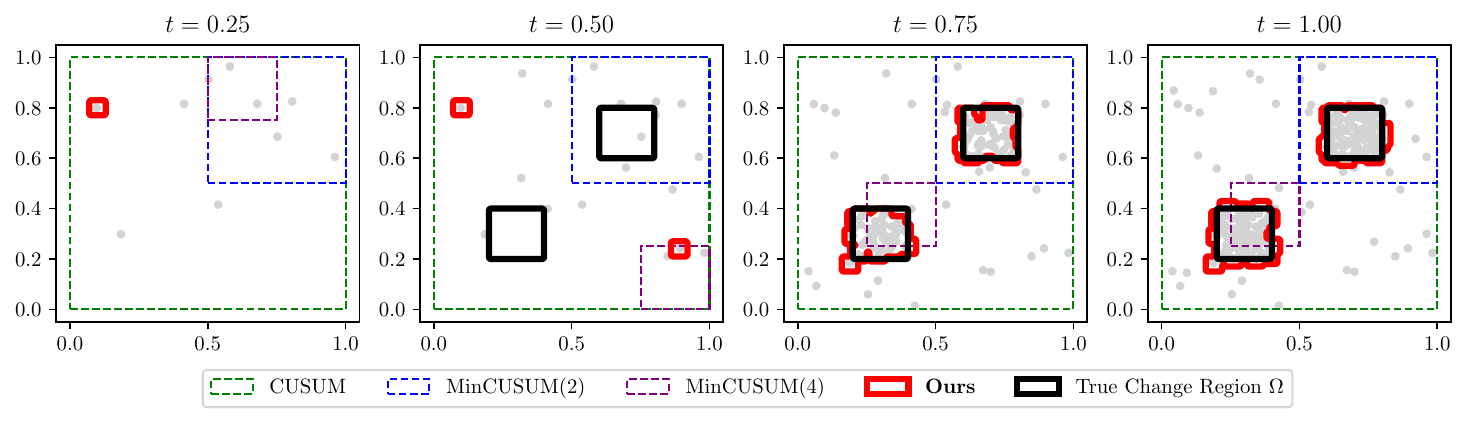}
    \caption{Temporal evolution of the estimated change region across different baselines, with each snapshot corresponding to treating the current time as the stopping time. We assume the true change point $\nu = 0.5$. The first row follows the existing setting, while the last two rows modify the true change region to more complex shapes (\textit{cross} and \textit{bimodal}). Gray dots represent event spatial coordinates.
    }
    \label{fig:spatial-evo}
\end{figure*}

Furthermore, \Cref{fig:spatial-evo} illustrates the temporal evolution of the estimated change regions under different synthetic experimental settings. CUSUM consistently estimates the change region as the entire spatial domain, which is expected since CUSUM (along with many other standard change point detection methods) is designed solely to raise temporal alarms and therefore interprets detected changes as global rather than spatially localized.
The two MinCUSUM variants are able to target more meaningful affected areas compared to CUSUM, but their localization remains coarse, as the estimated change regions are rectangular and are misaligned with the true change region $\Omega$. This limitation arises from the reliance on manual spatial discretization when casting spatio-temporal change point detection problems as multi-stream change point detection problems, which can lead to substantial localization errors whenever prior information of the change region geometry is unavailable. In contrast, across all settings, our method \alg{} accurately traces the contour of the true change region after the change occurs, even in challenging scenarios where the affected region is irregular or disconnected, without having access to prior information about the true change region shape.
This explains its superior Jaccard index performance in \Cref{fig:comparison} and highlights the unique applicability of \alg{} in settings where change regions are continuous, complex, and not well captured by predefined spatial partitions.

\subsubsection{Ablation Study on $\delta$}

We conduct an ablation study on the effect of the localized radius $\delta$ on the detection performance, as shown in \Cref{fig:radius}.
The detection-delay curve exhibits a rough convex trend, whereas the Jaccard-index curve displays a rough concave pattern. 
This reflects a fundamental tradeoff in choosing $\delta$: smaller $\delta$ yields neighborhoods that are too sparse to provide informative signals, while larger neighborhoods may contaminate the score function and detection statistics by mixing in events from both pre- and post-change regimes, both degrading detection performance.
Under the current synthetic setting, from \Cref{fig:radius}, a balanced choice of radius is $\delta = 0.1$, and is used across our synthetic experiment.

\subsubsection{Ablation Study on Iteration Number and Base Rate}

We also conduct ablation studies on the effect of different optimization rounds $K$ and pre-change base rate $\mu_0$ on the detection performance.
The pre-change base rate serves as a surrogate for the expected number of test sample input to the algorithm. 
\Cref{tab:param} presents the performance metrics evaluated on \alg{} under varying optimization iteration numbers $K=1,3,5$ and base rates $\mu_0 = 10^1,10^2,10^3$.
Across different values of $\mu_0$, both EDD and the Jaccard index remain nearly unchanged, indicating that although the proposed statistic updates only at discrete event times, its detection performance is robust to event frequency as opposed to our observation for the EDD upper bound derived in \Cref{thm:performance}.
Meanwhile, increasing $K$ consistently improves both EDD and the Jaccard index, suggesting that the optimization procedure benefits from allowing sufficient iterations for convergence.
Finally, although increasing either $K$ or $\mu_0$ leads to higher computational cost, the runtime never exceeds one second per trial, demonstrating the method’s computational feasibility for large-scale datasets. This efficiency arises from the score-based design of the statistic, which avoids costly numerical integration.

\begin{table}[!t]
\centering
\caption{\alg{} performance under different $K$ and $\mu_0$ specifications.}
\begin{adjustbox}{max width=\linewidth}
\begin{tabular}{cc|ccc}
\toprule
 &  & EDD & Jaccard Ind. & Run time (s) \\
\midrule
\midrule
\multirow[c]{3}{*}{$K=1$} & $\mu_0=10^1$ & $0.39 \pm 0.05$ & $0.04 \pm 0.01$ & $0.01 \pm 0.00$ \\
 & $\mu_0=10^2$ & $0.39 \pm 0.05$ & $0.04 \pm 0.01$ & $0.03 \pm 0.00$ \\
 & $\mu_0=10^3$ & $0.39 \pm 0.05$ & $0.04 \pm 0.01$ & $0.06 \pm 0.00$ \\
\midrule
\multirow[c]{3}{*}{$K=3$} & $\mu_0=10^1$ & $0.38 \pm 0.04$ & $0.11 \pm 0.02$ & $0.03 \pm 0.00$ \\
 & $\mu_0=10^2$ & $0.38 \pm 0.04$ & $0.11 \pm 0.02$ & $0.08 \pm 0.01$ \\
 & $\mu_0=10^3$ & $0.38 \pm 0.04$ & $0.11 \pm 0.02$ & $0.14 \pm 0.00$ \\
\midrule
\multirow[c]{3}{*}{$K=5$} & $\mu_0=10^1$ & $0.15 \pm 0.04$ & $0.09 \pm 0.01$ & $0.13 \pm 0.01$ \\
 & $\mu_0=10^2$ & $0.15 \pm 0.04$ & $0.09 \pm 0.01$ & $0.29 \pm 0.01$ \\
 & $\mu_0=10^3$ & $0.15 \pm 0.04$ & $0.09 \pm 0.01$ & $0.45 \pm 0.01$ \\
\midrule
\bottomrule
\end{tabular}
\end{adjustbox}
\label{tab:param}
\end{table}

\subsection{Real Data Experiments}

\begin{figure}[!t]
    \centering
    \includegraphics[width=1.0\linewidth]{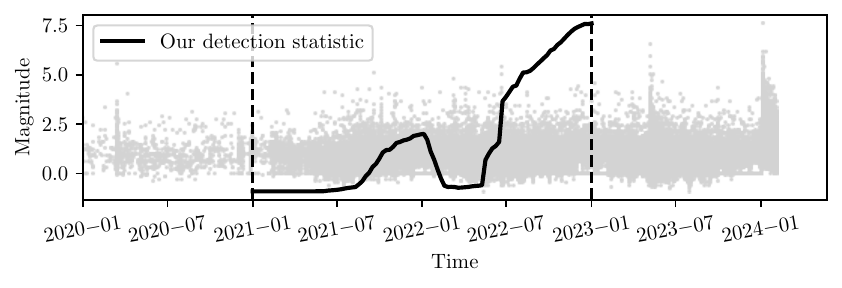}
    \includegraphics[width=1.0\linewidth]{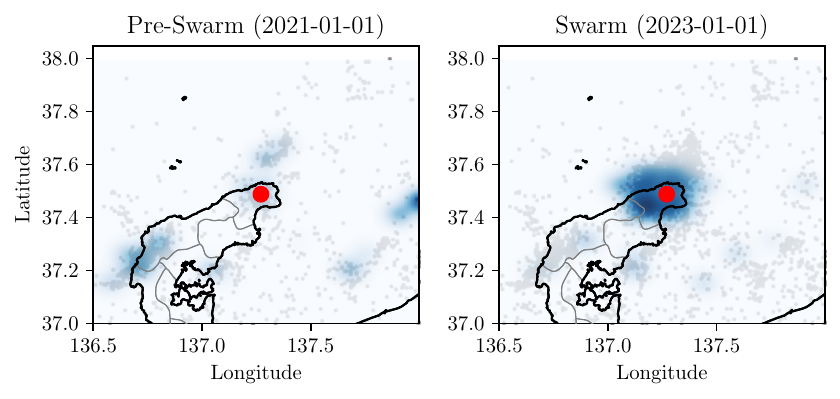}
    \vspace{-0.3in}
    \caption{\alg{} implementation outcome on the earthquake dataset. Gray dots represent event points.
    \textit{Top figure:} Black line represents the normalized detection statistic.
    \textit{Two bottom figures:} Red dot indicates the major shock epicenter, blue shaded regions represent the averaged estimated change regions.}
    \label{fig:earthquake}
    \vspace{-0.1in}
\end{figure}

We apply \alg{} to two real-world datasets to demonstrate its practical utility. In these experiments, we do not assume the availability of historical data from the post-change regime and adopt the corresponding extension of our algorithm, as detailed in \Cref{sec:extend}, to sequentially estimate and update the post-change score model using streaming data.

\subsubsection{Earthquake Early Warning}
We use the Japan Meteorological Agency (JMA) earthquake catalog for the Noto region of Japan \cite{jma}, covering the period from 2018 to 2024. During this interval, the region experienced two major earthquakes: the 2023 $M_j$ 6.5 event and the 2024 $M_j$ 7.6 event. Seismological studies have identified the years 2021–2023 as a swarm period \cite{peng2025evolution}, characterized by an anomalous increase in seismic activity that may signal elevated earthquake risk. Our objective is to automate the identification of the swarm periods using \alg{} to support earthquake early warning.

\Cref{fig:earthquake} (top) visualizes the resulting \alg{} detection statistic, which exhibits a significantly increasing trend beginning in late 2021, followed by another sharp rise in mid-2022. These time points potentially indicate two time marks for seismic activities underlying the subsequent major earthquakes, and indicate that \alg{} is able to capture change signals during the swarm period. \Cref{fig:earthquake} (bottom) presents two representative snapshots of the estimated change region, corresponding to the pre-change and post-change regimes, respectively. 
It can be seen that before entering the swarm period, \alg{} was unable to point to the main shock areas. However, after the swarm period, \alg{} clearly identifies the main shock region.
These results show that \alg{} can not only be deployed as an effective earthquake early warning and localization tool, but the findings also serve as valid evidence supporting swarm theory on earthquake precursors in seismology research.

\begin{figure}[t!]
    \centering
    \includegraphics[width=1.0\linewidth]{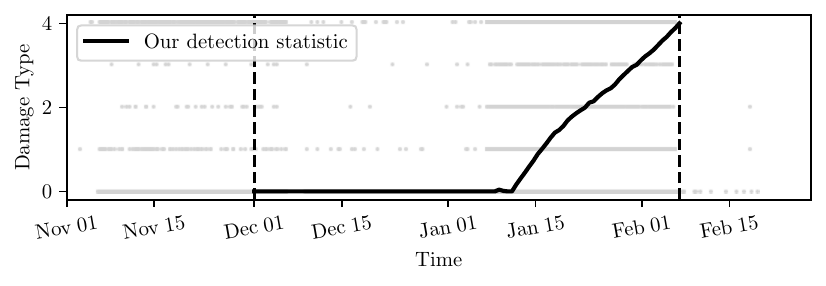}
    \includegraphics[width=1.0\linewidth]{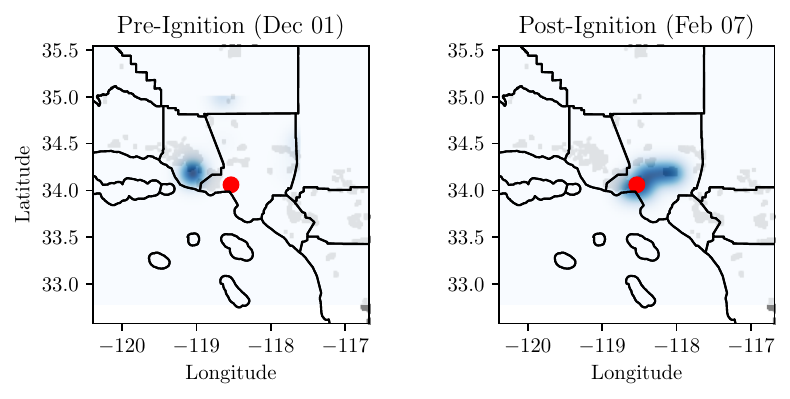}
    \vspace{-0.3in}
    \caption{\alg{} implementation outcome on the wildfire dataset. Gray dots represent event points.
    \textit{Top figure:} Black line represents the normalized detection statistic.
    \textit{Two bottom figures:} Red dot indicates the Palisades fire epicenter, blue shaded regions represent the averaged estimated change regions.}
    \label{fig:wildfire}
    \vspace{-0.2in}
\end{figure}

\subsubsection{Wildfire Detection}

We use the damage inspection dataset publicly released by the California Department of Forestry and Fire Protection \cite{calfire}. The dataset records structures affected by wildfires within a 100-meter perimeter and includes attributes such as event time, geographic coordinates, and damage severity. We focus on the Palisades wildfire, which ignited on January 7, 2025, in the Los Angeles region and caused substantial societal disruption due to its extensive destruction and negative social impact. Similar to the previous earthquake setting, our objective is to simulate an automated early-detection scenario for the Palisades wildfire using this dataset.

\Cref{fig:wildfire} presents the results of the wildfire case study. Although the detection statistic exhibits a clear upward drift toward the end of the monitoring period, the change point occurs precisely on the day the wildfire ignited rather than earlier. This suggests that, unlike the earthquake setting where the swarm period served as a meaningful precursor, no detectable precursor signals were identified by \alg{} in the wildfire scenario. This behavior is consistent with the nature of wildfire ignition, which often occurs abruptly or is even human-initiated, leaving little to no early warning in the event data.
While in the pre-ignition period, the observed wildfire hotspots originate from a separate, unrelated wildfire event, \alg{} correctly traces the estimated change region after ignition.
This result indicates that \alg{} can be deployed to provide rapid post-ignition localization and warning following the onset of the wildfire.

\section{Conclusion}

In this paper, we propose a score-based CUSUM statistic for spatio-temporal change detection in spatio-temporal point process settings, where the change is assumed to occur at an unknown continuous time and over a continuous spatial region. By combining denoising score matching, a weighted Hyvärinen score, and spatial localization, we construct a well-defined detection statistic that avoids explicit integration of the intensity function, yielding substantial computational advantages over traditional likelihood-based approaches.
To efficiently compute the statistic, we develop an alternative optimization scheme for the resulting nested maximization problem, enabling scalable optimization over continuous time and space. On the theoretical side, we establish performance guarantees for the false-alarm rate, characterize parametric forms of the drift, derive bounds on the expected detection delay, and analyze the accuracy of the estimated change region via the Jaccard index.
Through extensive synthetic experiments and two real-world case studies, we demonstrate the superiority of the proposed method over existing baselines for spatio-temporal change detection and provide practical insights into its behavior and deployment.

Several promising directions can be explored in future research.
($i$) Our current theoretical analysis characterizes the parametric form of the drift under the assumptions of either no kernel excitation or an exponential kernel. Dedicated theoretical studies could extend this analysis to broader classes of excitation kernels, yielding a more complete understanding of the kernel type's impact on the statistic's detection performance.
($ii$) Empirically, we observe that the performance of the proposed algorithm depends on the localization radius $\delta$ and the number of optimization iterations $K$. Future work may develop a principled way of configuring these hyperparameters, possibly by investigating their theoretical link with the detection performance, or develop a calibration algorithm that automates their selection in practical implementations.

\bibliographystyle{ieeetr}
\bibliography{ref}

\clearpage
\newpage
\appendix

Throughout the proofs, we fix $\delta$ and suppress the dependence on $\delta$ for notational simplicity.

\subsection{Proof of \Cref{lem:level-set}}
\label{app:level-set}

\begin{proof}
    We begin by defining the auxiliary function:
    $$
    L(s) \coloneqq 
    \int_{[\hat \tau^{(k - 1)}, T)} \Delta(x) \d \mathbb{N}(t)
    $$
    where $\mathbb{N}(t) \coloneqq \mathbb{N}([0, T) \times \mathcal{S})$ denotes the marginalized counting process of the temporal dimension. Then, \eqref{eq:outer} can be rewritten:
    \begin{equation}
        \label{eq:L}
        \hat \Omega^{(k)} = \arg\max_{\Omega \subseteq \mathcal{S}} \int_{\Omega} L(s) \d \mathbb{N}(t).
    \end{equation}
    The integral is maximized when the set is mapped to all positive (and nonzero) integrands, 
    \begin{equation}
        \label{eq:sandwich}
         \hat \Omega \in \Big\{ \Omega \subseteq \mathcal{S} : L^{-1}(\mathbb{R}_+) \subseteq \Omega \subseteq L^{-1}(\mathbb{R}_+ \cup \{0\})
         \Big\},
    \end{equation}
    where $L^{-1}(\cdot)$ is defined as the preimage set of $L$, \ie, for any $\mathcal{I} \subseteq \mathbb{R}$, there is
    $$
    L^{-1}(\mathcal{I}) \coloneqq \{ s \in \mathcal{S} : L(s) \in \mathcal{I} \}.
    $$
    By the definition of $L(\cdot)$, there is 
    \begin{align*}
        L^{-1}(\mathbb{R}_+)
        & =  \Big\{ s(x) : x \in \mathcal{D}'
        \ \text{and} \
        \Delta(x) > 0  \Big\}, \\
        L^{-1}(\mathbb{R}_+ \cup \{0\})
        & =  \mathcal{S} \setminus \Big\{ s(x) : x \in \mathcal{D}'
        \ \text{and} \
        \Delta(x) < 0  \Big\},
    \end{align*}
    where we denote $\mathcal{D}' = \mathcal{H}_t \setminus \mathcal{H}_{\hat \tau^{(k-1)}}$ for notation simplicity.
    Note that the two sets of spatial coordinates satisfying $\Delta(x) < 0$ and $\Delta(x) > 0$ are all finite sets since $\mathcal{D}$ is finite. Therefore, we conclude the proof.
\end{proof}

\subsection{Proof of \Cref{lem:far}}
\label{app:far}

We first state two lemmas, which prove the existence of useful proxy processes with nice theoretical properties. They will be used as an intermediate step in the derivation of the bound. 

\begin{lemma}
    \label{lem:proxy}
    Denote the following process
    \begin{equation*}
        \tilde W_{u:t}^\Omega \coloneqq
        \int_{[u, t) \times \Omega} \left\{ \zeta \cdot \Delta(x) - \log \left( \mathbb{E}_\infty[e^{\zeta \cdot \Delta(x)}]\right) \right\} \d \mathbb{N}(x).
    \end{equation*}
    Then, for any $\zeta > 0$, $u < t$ and $\Omega \subseteq \mathcal{S}$, $\exp(\tilde W_{t}^\Omega)$ is a nonnegative $\mathbb{P}_\infty$-martingale with mean 1.
\end{lemma}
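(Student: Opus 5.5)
The plan is to read $\tilde W^\Omega_{u:t}$ as an exponential-tilting (Wald-type) likelihood ratio built event by event. Write the integral against the counting measure as a finite sum over events $x_j\in\mathcal H_t$ with $t_j\in[u,t)$ and $s_j\in\Omega$. Then
\[
\exp\bigl(\tilde W^\Omega_{u:t}\bigr)=\prod_{j:\,x_j\in[u,t)\times\Omega}\frac{e^{\zeta\Delta(x_j)}}{\mathbb E_\infty[e^{\zeta\Delta(x_j)}]}.
\]
Nonnegativity is immediate because each factor is a ratio of positive quantities. I will interpret the normalizer as the conditional expectation $\mathbb E_\infty[e^{\zeta\Delta(x_j)}\mid\mathcal H_{t_{j-1}}]$ of the next event's score under the pre-change conditional density $p_0(\cdot\,;\delta)$ of \eqref{eq:cond-pdf-loc}. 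This is the natural reading, since $\Delta$ is evaluated on the transformed event given its history. I will also assume it is finite for the fixed $\zeta>0$, which is a mild moment condition I would add or absorb into the regularity assumptions.

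The martingale property comes from conditioning one event at a time. Index the process by event arrivals, with filtration $\mathcal F_j=\sigma(\mathcal H_{t_j})$. Then
\[
\mathbb E_\infty\bigl[\exp(\tilde W^\Omega_{u:t_{j}})\mid\mathcal F_{j-1}\bigr]=\exp(\tilde W^\Omega_{u:t_{j-1}})\cdot\mathbb E_\infty\!\left[\frac{e^{\zeta\Delta(x_j)}\mathbb 1\{s_j\in\Omega\}+\mathbb 1\{s_j\notin\Omega\}\,c_j}{\cdots}\,\Big|\,\mathcal F_{j-1}\right].
\]
The two cases are handled as follows:
\begin{itemize}
\item If $x_j$ lies in $[u,t)\times\Omega$, the increment factor has conditional mean $1$ by the definition of the normalizer.
\item Otherwise the factor is identically $1$.
\end{itemize}

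The subtle point is that membership $s_j\in\Omega$ is itself random. To handle it cleanly, I will either define the normalizer as $\mathbb E_\infty[e^{\zeta\Delta(x_j)}\mathbb 1_\Omega(s_j)+\mathbb 1_{\Omega^c}(s_j)\mid\mathcal F_{j-1}]$, or condition additionally on $s_j$. This works because $\Delta$ depends on $(t_j-t_{j-1},s_j)$ given history, and the normalization can be taken conditional on the spatial mark. In either case the factor has conditional expectation exactly $1$.

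By the tower property, the products then form a discrete-time martingale in $j$. Extending to continuous $t$ is immediate, because the process is piecewise constant between events and is stopped at $N(t)$. The mean is $1$ because the empty product at $t=u$ equals $1$.

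The main obstacle is this conditioning bookkeeping. The paper's normalizer $\mathbb E_\infty[e^{\zeta\Delta(x)}]$ is written unconditionally, and it must be read as the conditional (history- and region-restricted) moment generating function for the product to telescope exactly; under self-excitation the factors are not independent. I would make this interpretation explicit at the start and then run the telescoping argument.
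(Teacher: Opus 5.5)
\textbf{Verdict: same route as the paper, with a genuine gap in the step to continuous time.}

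Your route is the paper's: peel off one event at a time, show each factor has conditional mean one, and recurse back to $u$. Your history-conditional normalizer is in fact what the paper's first display silently needs. That display splits $\mathbb E_\infty[e^{\tilde W_{t'}}\times(\text{increment})\mid\mathcal H_\tau]$ into a product involving the unconditional $\mathbb E_\infty[e^{\zeta\Delta(x)}]$, which is valid only if increments are independent of the past.

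Two local corrections are needed.
\begin{enumerate}
\item Of your two normalizer options, only the second works. Take $N_j=\mathbb E_\infty[e^{\zeta\Delta(x_j)}\mathbbm{1}_\Omega(s_j)+\mathbbm{1}_{\Omega^c}(s_j)\mid\mathcal F_{j-1}]$ with factor $1$ off $\Omega$. Then the conditional mean of the factor is $a/(a+b)+b$, where $a=\mathbb E_\infty[e^{\zeta\Delta(x_j)}\mathbbm{1}_\Omega(s_j)\mid\mathcal F_{j-1}]$ and $b=\mathbb P_\infty(s_j\notin\Omega\mid\mathcal F_{j-1})$. This equals $1$ only if $b=0$ or $a+b=1$. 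Use $\mathbb E_\infty[e^{\zeta\Delta(x_j)}\mid s_j,\mathcal F_{j-1}]$ or $\mathbb E_\infty[e^{\zeta\Delta(x_j)}\mid s_j\in\Omega,\mathcal F_{j-1}]$ instead.
\item The expectation must be taken under the $\mathbb P_\infty$-conditional law of the next event of the whole process, with density $\lambda_0(x)\exp\bigl(-\int_{[t_{j-1},t(x))\times\mathcal S}\lambda_0(v)\,\d v\bigr)$. It is not taken under $p_0(\cdot\,;\delta)$, which is the law of the next $\mathcal B_\delta(x)$-event (Lemma~\ref{lem:B-cond-pdf}) and would not give conditional mean one.
\end{enumerate}

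\textbf{The genuine gap} is your claim that extending to continuous $t$ is immediate. What you construct is a martingale in the event index $j$ with respect to $(\mathcal F_j)$. The lemma, and its uses downstream, require $\mathbb E_\infty[e^{\tilde W_{u:r}}\mid\mathcal H_s]=e^{\tilde W_{u:s}}$ at non-event times $s<r$. Those uses are the $\d u$-integral over continuous $u$ in Lemma~\ref{lem:proxy-2} and optional sampling at $\tilde\nu$ in the proof of Lemma~\ref{lem:far}.

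Piecewise constancy does not give this. $\mathcal H_s$ also records that no event occurred in $(t_{N(s)},s]$. Since $\Delta(x_j)$ depends on the waiting time $t_j-t_n$, this information shifts the law of the next factor.

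Concretely, take $s\ge u$ inside an inter-event interval. The right-derivative of $r\mapsto\mathbb E_\infty[e^{\tilde W_{u:r}}\mid\mathcal H_s]$ at $r=s$ is $e^{\tilde W_{u:s}}\int_\Omega\lambda_0(s,\sigma)\bigl(e^{\zeta\Delta(s,\sigma)}/c(\sigma)-1\bigr)\,\d\sigma$. Here $\Delta(s,\sigma)$ is the score a hypothetical event at $(s,\sigma)$ would receive, and $c(\sigma)$ is your normalizer. In the Poisson case of Corollary~\ref{cor:kernel} with $w(x)=x$, $\Delta$ is affine and non-constant in $t(x)-t_n$. So this drift is strictly monotone in $s$ and cannot vanish identically on the interval.

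Likewise, $N(t)$ is not an $(\mathcal F_j)$-stopping time: $\{N(t)\le j\}=\{t_{j+1}\ge t\}$ is $\mathcal F_{j+1}$- but not $\mathcal F_j$-measurable. So even $\mathbb E_\infty[e^{\tilde W_{u:t}}]=1$ at a fixed $t$ does not follow from the discrete martingale.

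The paper's recursion has the same defect: it conditions on $\mathcal H_\tau$ for arbitrary $\tau$ but only ever steps across event times. So this is not a departure from the paper, but the step fails as written.

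A different per-event moment-generating normalizer will not fix this. One that ignores $t_j$ leaves the drift above, and one that conditions on $t_j$ makes every factor identically $1$. You have two options:
\begin{enumerate}
\item Restrict the lemma, and its downstream uses, to event times.
\item Replace the per-event log-MGF by the continuous compensator, that is, use $\exp\bigl(\int_{[u,t)\times\Omega}\zeta\Delta\,\d\mathbb N-\int_{[u,t)\times\Omega}(e^{\zeta\Delta(x)}-1)\lambda_0(x)\,\d x\bigr)$ with $\Delta(x)$ computed from the history before $t(x)$. This is the standard exponential martingale of a point process and is a genuine continuous-time local martingale.
\end{enumerate}
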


\begin{proof}[Proof for \Cref{lem:proxy}]
    We abbreviate $\tilde W_{u:t}^\Omega$ as $\tilde W_{t}$ when clear from context.
    For any $\zeta > 0$, $u < t$ and $\Omega \subseteq \mathcal{S}$,
    it is easy to see non-negativity. To prove that it is a $\mathbb{P}_\infty$-martingale, notice that for any $\tau \in [0, T)$,
    \allowdisplaybreaks
    \begin{align*}
        & \mathbb{E}_\infty[e^{\tilde W_t} | \mathcal{H}_{\tau}] = \mathbb{E}_\infty[ e^{\tilde W_{t'}} | \mathcal{H}_{\tau}] \\
        &  \times \mathbb{E}_\infty\left[ \exp\left(  \int_{[t', t) \times \Omega} \zeta \cdot \Delta(x) \d \mathbb{N}(x) \right) | \mathcal{H}_{\tau} \right] \\
        &  \div \mathbb{E}_\infty \left[
        \exp\left( \int_{[t', t) \times \Omega} \log \left( \mathbb{E}_\infty[e^{\zeta \cdot \Delta(x)}]\right) \d \mathbb{N}(x)
        \right) | \mathcal{H}_{\tau} \right] \\
        & = \mathbb{E}_\infty[ e^{\tilde W_{t'}} | \mathcal{H}_{\tau}] \times \mathbb{E}_\infty[e^{\zeta \cdot \Delta(x)} ] \div \mathbb{E}_\infty[e^{\zeta \cdot \Delta(x
        )} ] \\
        & =\mathbb{E}_\infty[ e^{\tilde W_{t'}} | \mathcal{H}_{\tau}],
    \end{align*}
    where $t'$ denotes the last event timestamp prior to $t$.
    We can then recurse on $\mathbb{E}_\infty[ e^{\tilde W_{t'}} | \mathcal{H}_{\tau}]$ by unravelling the last event once at a time to derive
    $$
    \mathbb{E}_\infty[e^{\tilde W_t} | \mathcal{H}_{\tau}] = 
    \mathbb{E}_\infty[e^{\tilde W_{t'}} | \mathcal{H}_{\tau}] =
    \ldots = 
    e^{\tilde W_\tau}.
    $$
    This proves that it is a $\mathbb{P}_\infty$-martingale.
    To derive its mean,
    simply note that
    $$
    \mathbb{E}_\infty[e^{\tilde W_t}] = \mathbb{E}_\infty[e^{\tilde W_u}] = e^0 = 1
    $$
    We finish the proof.
\end{proof}

\begin{lemma}
    \label{lem:proxy-2}
    Let $\tilde W_{u:t}^\Omega$ be defined as in \Cref{lem:proxy}. Consider the following process:
    $$
    M_t^\Omega = \int_0^t \left\{ \exp(\tilde W_{u:t}^\Omega) -  1 \right\} \d u.
    $$
    We abbreviate $M_t^\Omega$ as $M_t$ when clear from context. 
    Then for any $\Omega$, $M_t$ is a $\mathbb{P}_\infty$-martingale with mean $0$.
\end{lemma}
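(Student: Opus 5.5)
The natural approach is to mirror the Shiryaev--Roberts construction: write $M_t$ as an integral over starting points $u$ of the mean-one martingales from \Cref{lem:proxy}, and then exchange conditional expectation with the $\d u$ integral. Write $R_t \coloneqq \int_0^t \exp(\tilde W_{u:t}^\Omega)\,\d u$, so that $M_t = R_t - t$. For $s<t$, split the integral at $s$:
\[
M_t = \int_0^s \bigl\{ e^{\tilde W_{u:t}} - 1 \bigr\}\d u + \int_s^t \bigl\{ e^{\tilde W_{u:t}} - 1 \bigr\}\d u .
\]
The proof then amounts to showing that the first piece has conditional expectation $M_s$ given $\mathcal H_s$, and that the second has conditional expectation $0$.

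For the first piece with $u\le s$, the additive structure of the counting integral gives $\tilde W_{u:t} = \tilde W_{u:s} + \tilde W_{s:t}$. Since $\tilde W_{u:s}$ is $\mathcal H_s$-measurable,
\[
\mathbb E_\infty\bigl[e^{\tilde W_{u:t}}\mid\mathcal H_s\bigr] = e^{\tilde W_{u:s}}\,\mathbb E_\infty\bigl[e^{\tilde W_{s:t}}\mid \mathcal H_s\bigr] = e^{\tilde W_{u:s}}.
\]
The last equality is the conditional mean-one property of \Cref{lem:proxy} applied to the process started at $s$; it is exactly the event-by-event unravelling used there. For the second piece with $u\ge s$, the tower property and the same lemma give
\[
\mathbb E_\infty\bigl[e^{\tilde W_{u:t}}\mid\mathcal H_s\bigr] = \mathbb E_\infty\bigl[\mathbb E_\infty[e^{\tilde W_{u:t}}\mid\mathcal H_u]\mid\mathcal H_s\bigr] = 1,
\]
so the integrand $e^{\tilde W_{u:t}}-1$ has conditional mean zero.

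To exchange the conditional expectation with $\int\d u$, I would invoke conditional Fubini--Tonelli. It applies because the integrands are nonnegative, up to the deterministic shift by $1$ on a bounded interval, and are jointly measurable in $(u,\omega)$: $u\mapsto \tilde W_{u:t}$ is piecewise constant with jumps only at event times. Combining the two pieces yields $\mathbb E_\infty[M_t\mid\mathcal H_s] = \int_0^s \{e^{\tilde W_{u:s}}-1\}\d u = M_s$. Integrability follows from the same Tonelli computation, which gives $\mathbb E_\infty[R_t]=t<\infty$. The mean is $\mathbb E_\infty[M_t]=\mathbb E_\infty[M_0]=0$, or directly $\int_0^t(1-1)\,\d u=0$. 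Adaptedness is clear because $M_t$ depends only on events in $[0,t)\times\Omega$.

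The main obstacle is justifying $\mathbb E_\infty[e^{\tilde W_{s:t}}\mid\mathcal H_s]=1$ rigorously. This requires the per-event normalizer $\mathbb E_\infty[e^{\zeta\Delta(x)}]$ to be the conditional moment generating function given the history, and the random number of events in $[s,t)$ to be handled properly. I would treat it by conditioning successively on each event time, using the optional stopping form of the recursion in \Cref{lem:proxy}. I would also assume $\mathbb E_\infty[e^{\zeta\Delta}]<\infty$ so that everything is finite.
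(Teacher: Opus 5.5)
Your proposal is correct and follows essentially the paper's argument. You split the $\mathrm{d}u$-integral at the earlier time $s$, factor $e^{\tilde W_{u:t}}=e^{\tilde W_{u:s}}e^{\tilde W_{s:t}}$ for $u\le s$, and use the conditional mean-one property from \Cref{lem:proxy} (via the tower property for $u\ge s$) to obtain $\mathbb{E}_\infty[M_t\mid\mathcal H_s]=M_s$. You are somewhat more careful than the paper: you justify exchanging conditional expectation and integral by conditional Fubini--Tonelli, and you keep the $-1$ terms straight. The paper's intermediate display writes $M_{t'}\exp(\tilde W_{t':t})$ where $\int_0^{t'}\exp(\tilde W_{u:t'})\,\mathrm{d}u\cdot\exp(\tilde W_{t':t})$ is meant. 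Like the paper's proof, yours is only as strong as \Cref{lem:proxy}, and your closing caveat about its conditional normalization correctly identifies the real weak point.
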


\begin{proof}
    Note that for any $t' < t$, there is the following recursion:
    \begin{align*}
        M_t
        & = \int_0^t \left\{ \exp(\tilde W_{u:t}^\Omega) -  1 \right\} \d u \\
        & = \int_0^{t'} \exp(\tilde W_{u:t}^\Omega) \d u + \int_{t'}^{t} \exp(\tilde W_{u:t}^\Omega) \d u - t. 
    \end{align*}
    The first term can be further decomposed as
    \begin{align*}
        \int_0^{t'} \exp(\tilde W_{u:t}^\Omega) \d u
        & = \int_0^{t'} \exp(\tilde W_{u:t'}^\Omega) \cdot \exp(\tilde W_{t':t}^\Omega) \d u \\
        & = M_{t'} \cdot \exp(\tilde W_{t':t}^\Omega).
    \end{align*}
    Consequently, there is
    \allowdisplaybreaks
    \begin{align*}
        \mathbb{E}_\infty[M_t \mid \mathcal{H}_{t'} ]
        & = \int_0^{t'} \exp(\tilde W_{u:t'}^\Omega) \d u \cdot \mathbb{E}_\infty[\exp(\tilde W_{t':t}^\Omega) \mid \mathcal{H}_{t'} ] \\
        & \quad + \int_{t'}^t \mathbb{E}_\infty [ \exp(\tilde W_{u:t}^\Omega)\mid \mathcal{H}_{t'} ] \d u - t \\
        & = \int_0^{t'} \exp(\tilde W_{u:t'}^\Omega) \d u - t' = M_{t'},
    \end{align*}
    where the second equality follows from \Cref{lem:proxy}. Therefore, we've proved that $M_t$ is a $\mathbb{P}_\infty$-martingale. To derive its mean, simply note that
    $$
    \mathbb{E}_\infty[M_t] = \mathbb{E}_\infty[M_0] = 0.
    $$
    We finish the proof.
\end{proof}

Now, we present the complete proof for \Cref{lem:far}.
The proof is adapted from the proof of Theorem V.2 of \cite{chen2025score}.

\begin{proof}
    Given prior distribution $\pi(\Omega)$ over spatial regions $\Omega \subseteq \mathcal{S}$, note the following relationship:
    \begin{align*}
        W_t
        & = \sup_{\Omega \subseteq \mathcal{S}} \sup_{0 \leq u < t} \int_{[u, t) \times\Omega} \Delta(x) \d \mathbb{N} (x) \\
        & \hspace{-1ex} \leq \sup_{\Omega \subseteq \mathcal{S}} \sup_{0 \leq u < t} \int_{[u, t) \times \Omega} \Delta(x) - \frac{1}{\zeta}\log \left( \mathbb{E}_\infty[e^{\zeta \cdot \Delta(x)}]\right)  \d \mathbb{N}(x) \\
        & = \sup_{\Omega \subseteq \mathcal{S}} \sup_{0 \leq u < t} \frac{1}{\zeta} \tilde W_{u:t}^\Omega 
        \leq \sum_{\Omega \in 2^\mathcal{S}}\int_{[0, t)} \frac{1}{\zeta} \tilde W_{u:t}^\Omega \d u.
    \end{align*}
    Denote the Voronoi cell associated with the $i$-th event as
    $$
    \mathcal{V}_i = \left\{
    s \in \mathcal{S} : \| s - s(x_i) \|_2 \leq \| s - s(x_j) \|_2, \quad \forall j \neq i  
    \right\}.
    $$
    We then define the collection of all Voronoi regions that can be formed by events observed up to time~$t$ as
    $$
    \mathcal{V}(t) = \left\{ \bigcup_{i \in \mathcal{I}} \mathcal{V}_i : \mathcal{I} \subseteq \{1, \ldots, N(t) \} \right\}.
    $$
    Then, \Cref{lem:level-set} implies the following equivalence:
    $$
    \sum_{\Omega \in 2^\mathcal{S}}\int_{[0, t)} \frac{1}{\zeta} \tilde W_{u:t}^\Omega \d u = \sum_{\Omega \in \mathcal{V}(t)}\int_{[0, t)} \frac{1}{\zeta} \tilde W_{u:t}^\Omega \d u
    $$
    
    Define the stopping time
    $$
    \tilde \nu \coloneqq \inf \left\{ t \in [0,T) :
    \sum_{\Omega \in \mathcal{V}(t)} \int_{[0, t)} \exp(\tilde W_{u:t}^\Omega) \d u  \geq e^{\gamma \zeta} \right\},
    $$
    then there must be $\nu \geq \tilde \nu$, so we only need to lower bound $\mathbb{E}_\infty[\tilde \nu]$.
    Consider the following process:
    $$
    \tilde M_t \coloneqq  \sum_{\Omega \in \mathcal{V}(t)} \underbrace{\int_{0}^t  \left\{ \exp(\tilde W_{u:t}^\Omega) - 1\right\} \d u}_{= M_t}
    $$
    By \Cref{lem:proxy-2}, we know that $M_t$ and $\tilde M_t$ are both $\mathbb{P}_\infty$ martingales with mean 0.
    Using the optional sampling theorem on $\tilde M_t$, there is
    \begin{align*}
        0
        = \mathbb{E}_\infty \left[ \tilde M_{0}\right] =  \mathbb{E}_\infty \left[ \tilde M_{\tilde \nu}\right]  \geq e^{\gamma\zeta} - \mathbb{E}_\infty \left[\tilde  \nu \cdot 2^{N(T)} \right],
    \end{align*}
    where the last inequality follows from the definition of $\tilde \nu$. Therefore, rearranging terms, we get
    $$
    \mathbb{E}_\infty[\nu] \leq  \mathbb{E}_\infty[\tilde  \nu] = \frac{e^{\gamma\zeta} }{\mathbb{E}_\infty[2^{N(T)}]} = O\left( \frac{e^{\gamma} }{\mathbb{E}_\infty[2^{N(T)}]}
    \right).
    $$
    where we suppress the dependency on the constants $\zeta$ and $C$.
    We complete the proof by noting that the ARL by definition must not be larger than the time horizon $T$.
\end{proof}

\subsection{Proof of \Cref{thm:drift}}
\label{app:drift}

To prove \Cref{thm:drift}, we first state a lemma adapted from \cite{liu2022estimating}.
The two conditions \eqref{eq:bound-cond} require that the weighting function must smooth out the PDFs at the boundary terms.
Note that this does not require that the PDFs themselves be vanishing at their boundaries, which makes it weaker than the general assumption for integrate-by-parts \cite{hyvarinen2005estimation}.

\begin{lemma}[Theorem 2 of \cite{liu2022estimating}]
    \label{lem:bound-cond}
    For any two continuous PDFs $p(x)$ and $q(x)$ supported on $[0, +\infty) \times \otimes_{k = 2}^{d-1} [a_k, b_k]$, suppose that the weighting function $\w(x)$ satisfies
    \begin{equation}
        \label{eq:bound-cond}
        \begin{aligned}
            \lim_{x_1 \to 0^+, x_1 \to +\infty} \w_k(x) p(x) \partial_k\log q(x) & = 0 \\
            \lim_{x_k \to a_k, x_k \to b_k} \w_k(x) p(x) \partial_k\log q(x) & = 0, \quad \forall k \geq 2,
        \end{aligned}
    \end{equation}
    then, $\mathbb{E}_{x \sim p} \operatorname{div}\left[ (\w \odot \nabla \log q) (x)\right]$ can be rewritten as
    \begin{align*}
        - \mathbb{E}_{x \sim p} \langle (\w^{1/2} \odot \nabla  \log q) (x), (\w^{1/2} \odot \nabla  \log p)(x) \rangle,
    \end{align*}
    where $\nabla$ denotes the gradient taken with respect to $x$.
\end{lemma}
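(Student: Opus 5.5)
The statement to prove is \Cref{lem:bound-cond}: a weighted integration-by-parts identity on the half-infinite box $[0,+\infty)\times\otimes_{k=2}^{d-1}[a_k,b_k]$. The plan is to work coordinate by coordinate. We write
\[
\operatorname{div}\left[\w\odot\nabla\log q\right](x)=\sum_k \partial_k\bigl(\w_k(x)\,\partial_k\log q(x)\bigr),
\]
take the expectation under $p$, and move the derivative onto $p$ with a one-dimensional integration by parts in each coordinate.

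For a fixed coordinate $k$, we apply Fubini and integrate over $x_k$ first, holding the other coordinates fixed:
\[
\int p\,\partial_k\bigl(\w_k\partial_k\log q\bigr)\,\d x_k=\Bigl[p\,\w_k\,\partial_k\log q\Bigr]_{\partial}-\int \w_k\,\partial_k\log q\;\partial_k p\,\d x_k .
\]
The boundary bracket is evaluated at $x_1\to0^+$ and $x_1\to+\infty$ when $k=1$, and at $x_k\to a_k,b_k$ when $k\ge2$. In every case it vanishes by hypothesis \eqref{eq:bound-cond}. In the remaining integral we use $\partial_k p=p\,\partial_k\log p$. This turns it into
\[
-\int p\,\w_k\,\partial_k\log q\,\partial_k\log p\,\d x_k=-\int p\,(\w_k^{1/2}\partial_k\log q)(\w_k^{1/2}\partial_k\log p)\,\d x_k .
\]
Integrating over the other coordinates and summing over $k$ then gives exactly $-\mathbb{E}_{x\sim p}\langle \w^{1/2}\odot\nabla\log q,\ \w^{1/2}\odot\nabla\log p\rangle$.

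The main obstacle is justifying these manipulations rather than the algebra itself. Three things need care.
\begin{enumerate}
\item The limits in \eqref{eq:bound-cond} must hold uniformly enough, or at least almost everywhere in the frozen coordinates, to pass under the outer integral.
\item Both sides must be integrable, so that Fubini and the splitting of the integral are legitimate.
\item The functions $p$, $\w_k$ and $\log q$ must be absolutely continuous along each coordinate line.
\end{enumerate}
I would state these as implicit regularity assumptions: $p,q$ are positive and continuously differentiable on the interior, and $\w$ is Lipschitz, which covers the distance-to-boundary weight $w$. With these assumptions the boundary terms vanish and dominated convergence handles the limits. For the proof of \Cref{thm:drift}, the lemma will be applied with $(p,q)=(p_0,p_1)$, with $(p,q)=(p_1,p_0)$, and with $q=p$; completing the square then yields the weighted Fisher divergences.
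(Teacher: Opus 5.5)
Your proposal is correct and follows the paper's proof essentially step for step: Fubini to isolate $x_k$, a one-dimensional integration by parts whose boundary term $p\,\w_k\,\partial_k\log q$ is killed by \eqref{eq:bound-cond}, then $\partial_k p = p\,\partial_k\log p$ and summation over $k$. The paper leaves all regularity implicit, so your explicit conditions are an addition rather than a change of route. Strictly, the absolute continuity you need along coordinate lines is of $p$ and of the product $\w_k\,\partial_k\log q$, not of $\log q$ itself.
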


\begin{proof}
    Denote $\mathcal{X}_k$ as its $k$-th marginalized dimension, $L(\cdot), U(\cdot): 2^\mathbb{R} \to \mathbb{R}$ as the mapping that extracts lower and upper bounds of a set, and $x_{-k}$ as the vector $x$ after removing its $k$-th entry. We make the following expansion:
    \allowdisplaybreaks
    \begin{align*}
        &\mathbb{E}_{x \sim p} \operatorname{div} [ (\w \odot \log q) (x) ]\\
        & = \int_{\mathcal{X}} p(x) \sum_{k = 1}^d \partial_k (\w_k(x) \partial_k \log q (x)) \d x \\
        & = \sum_{k = 1}^d \int_{\mathcal{X}_{-k}} \int_{\mathcal{X}_k} p(x) \partial_k (\w_k(x) \partial_k \log q (x)) \d x_k \d x_{-k} \\
        & = \sum_{k = 1}^d \int_{\mathcal{X}_{-k}} \left[ p(x) \w_k(x) \partial_k \log q(x) \Big|_{L(\mathcal{X}_k)}^{U(\mathcal{X}_k)} \right. \\
        & \quad \left. - \int_{\mathcal{X}_k} \partial_k p(x) \w_k(x) \partial_k \log q (x) \d x_k \right] \d x_{-k} \\
        & = - \sum_{k = 1}^d \int_{\mathcal{X}_{-k}} \int_{\mathcal{X}_k} \partial_k p(x) \w_k(x) \partial_k \log q (x) \d x_k \d x_{-k} \\
        & = - \int_{\mathcal{X}} \sum_{k = 1}^d \partial_k p(x) \w_k(x) \partial_k \log q (x) \d x \\
        & = - \int_{\mathcal{X}} \sum_{k = 1}^d \partial_k \log p(x) \cdot p(x) \cdot \w_k(x) \partial_k \log q (x) \d x \\
        & = - \mathbb{E}_{x \sim p} \left[ \sum_{k = 1}^d \partial_k \log p(x) \w_k(x) \partial_k \log q (x) \right] \\
        & = - \mathbb{E}_{x \sim p} \langle (\w^{1/2} \odot \nabla \log q) (x), (\w^{1/2} \odot \nabla \log p)(x) \rangle.
    \end{align*}
    Where the first equality is by definition of the divergence operator; the third equality is by the integral-by-parts trick; the fourth equality is by \eqref{eq:bound-cond}; the sixth equality is due to $\partial_k \log p(x) = \partial_k p(x) / p(x)$. The proof is completed.
\end{proof}

\begin{proof}[Proof of \Cref{thm:drift}]
    We only prove for $\mathbb{E}_0[\Delta(x)]$, and the proof for $\mathbb{E}_1[\Delta(x)]$ is similar.
    By definition of $\Delta(x)$, there is
    $$
    \mathbb{E}_0[\Delta(x)] = \mathbb{E}_0[\psi_0(x) - \psi_1(x)].
    $$
    By \eqref{eq:hyvarinen}, each term $\mathbb{E}_0[\psi_i(x)]$ is equal to:
    \begin{align*}
        \mathbb{E}_0 \left\| \sqrt{w(x)} \odot f_i (x) \right\|_2^2 + \underbrace{2 \cdot\mathbb{E}_0 \operatorname{div} \Big[ \w(x) \odot f_i (x) \Big]}_{(A)}
    \end{align*}
    Using \Cref{lem:bound-cond}, by taking $\nabla \log q \leftarrow f_i$ and $\nabla \log p \leftarrow f_0$,
    $$
    (A) = - 2 \cdot \mathbb{E}_0 \left\langle \underbrace{\sqrt{w(x)} \odot f_i(x)}_{(B)}, \underbrace{\sqrt{w(x)} \odot f_0(x)}_{(C)} \right\rangle.
    $$
    Therefore, there is
    \begin{align*}
        \mathbb{E}_0[\psi_i(x)]
        & = \mathbb{E}_0\left[ \| (B) \|^2_2 - 2 \langle (B), (C) \rangle + \| (C) \|_2^2 - \| (C) \|_2^2 \right]\\
        & = \mathbb{E}_0 \left[ \left\| \sqrt{w(x)} \odot \left[ f_i(x) - f_0(x) \right] \right\|_2^2\right] - \mathbb{E}_0\|(C)\|_2^2.
    \end{align*}
    Similarly, there is
    The first term vanishes when $i = 0$, and since for both $i = 0, 1$, $\mathbb{E}_0[\psi_i(x)]$ shares the second term and would thus cancel out. Therefore, there is
    $$
    \mathbb{E}_0[\Delta(x)] = \mathbb{E}_0 \left[ \left\| \sqrt{w(x)} \odot \left[ f_1(x) - f_0(x) \right] \right\|_2^2\right].
    $$
    Similarly to the proof above, one can also derive
    $$
    \mathbb{E}_1[\Delta(x)] = - \mathbb{E}_1 \left[ \left\| \sqrt{w(x)} \odot \left[ f_1(x) - f_0(x) \right] \right\|_2^2\right].
    $$
    By defining $D_F(f_0 \| f_1)$ and $D_F(f_1 \| f_0)$ as stated in \Cref{thm:drift}, we complete the proof.
\end{proof}

\subsection{Proof of \Cref{lem:param}}

\begin{proof}
    By \Cref{thm:drift}, we begin by expanding on $f_i(x)$:
    \begin{align*}
        f_i(x)
        & = \underbrace{\nabla_x \log \lambda_i(x)}_{(A)} - \underbrace{\nabla_x \int_{[t_n, t) \times \mathcal{B}(x)} \lambda_i(x') \d x'}_{(B)}.
    \end{align*}
    We analyze these two terms separately. For the first term:
    \begin{align*}
        (A)
        & = \nabla_x \lambda_i(x) / \lambda_i(x) = \frac{\alpha \nabla_x \int_{[0, t_n) \times \mathcal{B}(x)} \kappa(x, x') \d \mathbb{N}(x')}{\lambda_i(x)},
    \end{align*}
    where we used the fact that the base rate is a constant. 
    Since $\mathcal{B}(x)$ is a translation of a $\ell_p$-norm ball with radius $\delta$ by $x$, the right-hand side can be further expanded by using the Reynolds transport theorem \cite{marsden2003vector} as:
    $$
    (A) = \frac{\alpha \int_{[0, t_n) \times \mathcal{B}(x)} \nabla_x \kappa(x, x') + \nabla_{x'} \kappa(x, x') \d \mathbb{N}(x')}{\lambda_i(x)}.
    $$
    For term $(B)$, its first (temporal) entry is:
    \begin{align*}
        [(B)]_1
        & = \partial_t \int_{[t_n, t) \times \mathcal{B}(x)} \lambda_i(x') \d x'
        = \int_{\mathcal{B}(x)} \lambda_i(\tilde x') \d s' \\
        & = \mu_i \cdot |\mathcal{B}_\delta| + \underbrace{\alpha\int_{\mathcal{B}(x)} \int_{[0, t_n) \times \mathcal{B}(x)} \kappa(\tilde x', x'') \d \mathbb{N}(x'') \d s'}_{(C)},
    \end{align*}
    where we denote $\tilde x' \coloneq (t, s')$ for notation convenience.
    The other (spatial) entries are:
    \allowdisplaybreaks
    \begin{align*}
        [(B)]_{2:d}
        & = \nabla_s \int_{[t_n, t) \times \mathcal{B}(x)} \lambda_i(x') \d x' \\
        & = \int_{[t_n, t) \times\mathcal{B}(x)} \nabla_{s'} \lambda_i(x') \d s' \\
        & = \underbrace{\alpha\int_{[t_n, t) \times \mathcal{B}(x)} \int_{[0, t_n) \times \mathcal{B}(x)} \nabla_{s'} \kappa(x', x'') \d \mathbb{N}(x'') \d s'}_{(D)},
    \end{align*}
    where we also use the Reynolds transport theorem for the second equality. Since $(C)$ and $(D)$ are irrelevant to $i$ ($\kappa$ is shared), and therefore they cancel out each other in $f_1(x) - f_0(x)$. Denote the numerator of $(A)$ as
    $$
    \boldsymbol k(x) = \int_{[0, t_n) \times \mathcal{B}(x)} \nabla_x \kappa(x, x') + \nabla_{x'} \kappa(x, x') \d \mathbb{N}(x'),
    $$
    and by the fact that the pre- and post-change processes share the same form of the kernel $\kappa(\cdot, \cdot)$, there is:
    $$
    \frac{1}{\lambda_1(x)} - \frac{1}{\lambda_0(x)} = \frac{\mu_0 - \mu_1}{\lambda_1(x) \lambda_0(x)},
    $$
    therefore we conclude that
    there is
    $$
    f_1(x) - f_0(x) = (\mu_0 - \mu_1) \cdot \left[ \frac{\alpha \boldsymbol k(x)}{\lambda_1(x) \lambda_0(x)} +
    \begin{pmatrix}
        4\delta^2 \\
        \mathbf{0}_2
    \end{pmatrix}
    \right].
    $$
\end{proof}

\subsection{Proof of \Cref{cor:kernel}}

We provide proofs for each of the two settings.

\begin{proof}[Proof for HPP Case]
    Since the vector term $\boldsymbol k(x)$ defined in \Cref{lem:param} is zero under the HPP assumption, therefore,
    $$
    [f_1(x) - f_0(x)]_1 = (\mu_0 - \mu_1) \cdot |\mathcal{B}|,
    $$
    and all other entries of this vector are zero.
    By \Cref{thm:drift},
    \allowdisplaybreaks
    \begin{align*}
        D_F(f_i \| f_{1 - i})
        & = 
        \mathbb{E}_i \left\|
        \sqrt{\w(x)} \odot \left(
        f_0 (x) - f_1 (x)
        \right)
        \right\|_2^2 \\
        & = \mathbb{E}_i[t - t_n] \cdot \left( [f_0(x) - f_1(x)]_1 \right)^2 \\
        & = \frac{1}{\mu_i \cdot |\mathcal{B}|} \cdot (\mu_0 - \mu_1)^2 \cdot |\mathcal{B}|^2 \\
        & = \frac{(\mu_0 - \mu_1)^2 \cdot |\mathcal{B}|}{\mu_i},
    \end{align*}
    where the third equality is by the interarrival time of HPP.
    Therefore, we conclude that
    $$
    \mathbb{E}_i \left[ \Delta(x) \right] 
    = \text{sgn}\{i = 1\} \cdot \frac{(\mu_0 - \mu_1)^2 |\mathcal{B}|}{ \mu_i}.
    $$
\end{proof}

\begin{proof}[Proof for Kernel Case]

We first observe that $\boldsymbol{k}(x)$ vanishes under the exponential kernel. In fact, this property holds more generally for all \emph{translation-invariant} kernels. Specifically, recall that a kernel $\kappa : \mathcal X \times \mathcal X \to \mathbb R$ is translation invariant if
\begin{equation}
    \label{eq:kernel-inv}
    \kappa(x+a, x'+a) = \kappa(x,x') \quad \forall x,x',a \in \mathcal X.
\end{equation}
Equivalently, there exists a function $\varphi : \mathcal X \to \mathbb R$ such that
\[
\kappa(x,x') = \varphi(x - x').
\]
Under this representation, applying the chain rule yields
\[
\nabla_x \kappa(x,x') = \nabla \varphi(x - x'), 
\quad
\nabla_{x'} \kappa(x,x') = -\nabla \varphi(x - x').
\]
Therefore,
\[
\nabla_x \kappa(x,x') + \nabla_{x'} \kappa(x,x') \equiv 0,
\]
which proves the claim.
The condition~\eqref{eq:kernel-inv} is satisfied by a wide class of translation-invariant kernels, including the exponential kernel, Gaussian (radial) kernels, and the spatial triggering kernels used in ETAS models.

Back to the main proof, once again, we have
$$
[f_1(x) - f_0(x)]_1 = (\mu_0 - \mu_1) \cdot |\mathcal{B}|.
$$
Using the well-known result for the expected inter-event arrival time of a Hawkes process with an exponential excitation kernel,
$$
\mathbb{E}_i[t - t_n] = \frac{1 - \alpha}{\mu_i \cdot |\mathcal{B}| },
$$
we proceed analogously to the proof of the first statement and obtain
$$
\mathbb{E}_i \left[ \Delta(x) \right] 
= \text{sgn}\{i = 1\} \cdot \frac{(\mu_0 - \mu_1)^2 |\mathcal{B}|}{ \mu_i} \cdot \left( 1 - \alpha\right) .
$$
\end{proof}

\subsection{Proof for \Cref{thm:performance}}
\label{app:performance}


Since our setting implies non-i.i.d. data, we adapt the proof argument of Theorem 4 of \cite{lai1998information} for deriving our EDD bound. Following \cite{lai1998information}, we make assumptions on the information rate of the statistic:
\begin{ass}
    \label{ass:lln}
    Under the probability measure induced by $\lambda_1$,
    \begin{equation}
        \label{eq:info-rate}
        \lim_{t \to +\infty} \esssup_{\mathcal{H}_\tau} \frac{1}{N(t)} \int_{[\tau, t) \times \Omega} \Delta(x) \d \mathbb{N}(x) \overset{p}{\to} \mathbb{E}_1[\Delta(x)],
    \end{equation}
    \begin{equation}
        \label{eq:lln}
        \text{and} \quad
        \frac{N(t)}{t} \overset{p}{\to} \bar\lambda_1\frac{|\Omega|}{|\mathcal S|}.
    \end{equation}
\end{ass}
\Cref{ass:lln} states that asymptotically in the post-change regime, each post-change event roughly contributes the same amount of evidence on average \eqref{eq:info-rate}, and the number of these events roughly grows linear in time with a stable average \eqref{eq:lln}.
Both can be viewed as a type of law-of-large-numbers-type concentration requirement for the counting process, and hold trivially for HPPs due to the central limit theorem.
There have also been works \cite{hawkes1974cluster} showing similar properties for Hawkes processes with certain assumptions on its self-excitation kernel are met, such as stationarity and light-tailedness. 

\begin{proof}[Proof for EDD]
We fix an arbitrary change point $\tau \in [0, T)$ and work under the post-change probability measure induced by $\lambda_1$. For notational simplicity, denote
$$
I \coloneqq \mathbb{E}_1[\Delta(x)] = D_F(f_0\|f_1) > 0,
\quad
r \coloneqq \bar\lambda_1\frac{|\Omega|}{|\mathcal S|}.
$$

\paragraph*{Step 1: Constructing a proxy process}
We define a (non-maximized) cumulative evidence process
\[
\tilde W_{0:t} \coloneqq \int_{[0,t)\times\Omega} \Delta(x)\, d\mathbb N(x),
\]
and its threshold crossing time
\[
\tilde\nu \coloneqq \inf\bigl\{t\in[0,T): \tilde W_{0:t}\ge \gamma\bigr\}.
\]
By definition of $W_t$ as a supremum over time windows and spatial regions, we have $W_t\ge \tilde W_{0:t}$ for all $t$, hence
\[
\nu = \inf\{t: W_t\ge\gamma\} \le \inf\{t: \tilde W_{0:t}\ge\gamma\} = \tilde\nu,
\]
and therefore it suffices to bound $\mathbb E_1[\tilde\nu]$, since $\mathbb E_1[\nu]\le \mathbb E_1[\tilde\nu]$.

\paragraph*{Step 2: One-block underflow probability.}
Fix $\varepsilon\in(0,I)$ and define the block length
\[
t' \coloneqq \frac{\gamma}{(I-\varepsilon)\,r}.
\]
Note that $\{\tilde\nu>t'\}\subseteq\{\tilde W_{0:t'}<\gamma\}$, hence
\begin{align*}
    \mathbb P_1(\tilde\nu>t')
    &\le \mathbb P_1 \left(\tilde W_{0:t'}<\gamma\right) \\
    & = \mathbb P_1 \left(\frac{\tilde W_{0:t'}}{t'} < \frac{\gamma}{t'}\right)
    = \mathbb P_1 \left(\frac{\tilde W_{0:t'}}{t'} < (I-\varepsilon) r\right).
\end{align*}
By Assumption~\ref{ass:lln}, we have the convergence in probability
\begin{align*}
    & \frac{\tilde W_{0:t}}{t} = \frac{1}{t}\int_{[0,t)\times\Omega} \Delta(x)\, d\mathbb N(x) \\
    & = \Bigl(\frac{1}{N(t)}\int_{[0,t)\times\Omega}\Delta(x) d\mathbb N(x)\Bigr)
    \cdot
    \Bigl(\frac{N(t)}{t}\Bigr)
    \overset{p}{\longrightarrow}
    Ir,
\end{align*}
where the product convergence follows from Slutsky's theorem.
Consequently,
\[
\mathbb P_1(\tilde\nu>t') = o(1)\qquad\text{as }\gamma\to\infty.
\]

Moreover, by the $\esssup_{\mathcal H_\tau}$ form in~\eqref{eq:info-rate}--\eqref{eq:lln}, the same argument yields the \emph{uniform} conditional version:
there exists $\gamma_0(\varepsilon)$ such that for all $\gamma\ge \gamma_0(\varepsilon)$,
\[
c \coloneqq \sup_{u\ge 0}\ \esssup_{\mathcal H_u}\mathbb P_1\left(
\tilde W_{u: u + t'} < \gamma
\ \middle|\ \mathcal H_u
\right)
<1.
\]
There is moreover $c=c(\gamma,\varepsilon)\to 0$ as $\gamma\to\infty$.

\paragraph*{Step 3: Geometric tail bound across blocks.}
For integers $j\ge 1$, using the tower property and the definition of $c$,
\begin{align*}
\mathbb P_1(\tilde\nu>jt')
&=\mathbb E_1\left[\mathbf 1\{\tilde\nu>(j-1)t'\}\,
\mathbb P_1\left(\tilde\nu>jt'\mid \mathcal H_{(j-1)t'}\right)\right] \\
&\le \mathbb E_1\Bigl[\mathbf 1\{\tilde\nu>(j-1)t'\} \times \\
& \quad \left. \mathbb P_1\left(
\tilde W_{(j-1)t':jt'}
< \gamma
\ \middle|\ \mathcal H_{(j-1)t'}
\right)\right] \\
&\le c\,\mathbb P_1(\tilde\nu>(j-1)t') \le  \ldots \le c^j,
\end{align*}
where the last inequality follows from recursion.

\paragraph*{Step 4: Tail integral identity and expectation bound.}
Using the tail integral identity and partitioning time into blocks,
\begin{align*}
\mathbb E_1[\tilde\nu]
&=\int_0^\infty \mathbb P_1(\tilde\nu>t)\,dt
=\sum_{j=0}^\infty \int_{jt'}^{(j+1)t'} \mathbb P_1(\tilde\nu>t)\,dt \\
&\le \sum_{j=0}^\infty t'\,\mathbb P_1(\tilde\nu>jt')
\le t'\sum_{j=0}^\infty c^j
= \frac{t'}{1-c}.
\end{align*}
Since $c\to 0$ as $\gamma\to\infty$, we have $(1-c)^{-1}=1+o(1)$ and thus
\[
\mathbb E_1[\tilde\nu]
\le
\frac{\gamma}{(I-\varepsilon)\,r}\,(1+o(1)).
\]
Letting $\varepsilon\downarrow 0$ yields
\[
\mathbb E_1[\tilde\nu]
\le
\frac{\gamma}{Ir}\,(1+o(1))
=
\frac{\gamma\,|\mathcal S|}{\bar\lambda_1|\Omega|\,D_F(f_0\|f_1)}\,(1+o(1)).
\]
Finally, since $\nu\le \tilde\nu$, we conclude
\[
\mathbb E_1[\nu]
\le
\frac{\gamma\,|\mathcal S|}{\bar\lambda_1|\Omega|\,D_F(f_0\|f_1)}\,(1+o(1)),
\]
and notice that the EDD must always be upper bounded by the time horizon $T$,
which establishes the desired EDD bound.
\end{proof}

To prove the Jaccard index statement, we require the following lemma, where we refer to the localized process formed by $\mathcal{B}_\delta(x)$ as a $\mathcal{B}_\delta(x)$-process, and refer to the next event in this process as the next $\mathcal{B}_\delta(x)$-event.

\begin{lemma}
    \label{lem:B-cond-pdf}
    Given intensity $\lambda(x)$ defined on $\mathcal{B}_\delta(x)$, the conditional PDF of the next $\mathcal{B}_\delta(x)$-event is defined as:
    \begin{equation}
        \label{eq:cond-b-pdf}
        p(x; \delta) =
        \lambda(x) \exp\left(
        - \int_{[t_n, t(x)) \times \mathcal{B}_\delta(x)}
        \lambda(x')\d x'
        \right).
    \end{equation}
\end{lemma}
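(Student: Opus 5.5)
The plan is to obtain \eqref{eq:cond-b-pdf} by thinning: restrict the full process to the moving window $\mathcal{B}_\delta(x)$ and then apply the standard single-event density formula (the one preceding \eqref{eq:pdf-s}) to the restricted process. First I would define the $\mathcal{B}_\delta(x)$-process as the restriction of the counting measure $\mathbb N$ to $[0,T)\times\mathcal{B}_\delta(x)$. Its conditional intensity at any $u$ with $s(u)\in\mathcal{B}_\delta(x)$ is still $\lambda(u)$. This follows directly from definition \eqref{eq:intensity}: for $\mathrm du\subseteq[0,T)\times\mathcal B_\delta(x)$, the restricted count $\mathbb N(\mathrm du)$ equals the unrestricted one. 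We keep conditioning on the full history $\mathcal H_{t(u)}$, which is a filtration to which the restricted process is adapted.

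Next I would let $t_n$ be the time of the most recent $\mathcal{B}_\delta(x)$-event before $t(x)$, and compute the survival probability that no $\mathcal{B}_\delta(x)$-event occurs in $[t_n,t)$. The total rate of the restricted process at time $v$ is $\Lambda(v)=\int_{\mathcal{B}_\delta(x)}\lambda(v,s')\,\mathrm ds'$. The standard compensator (hazard) argument then gives
\begin{equation*}
\mathbb P(\text{no }\mathcal B_\delta(x)\text{-event in }[t_n,t)\mid\mathcal H_{t_n}) = \exp\Bigl(-\int_{t_n}^{t}\Lambda(v)\,\mathrm dv\Bigr).
\end{equation*}
To justify this, I would partition $[t_n,t)$ into small intervals, multiply the conditional no-event probabilities $1-\Lambda(v)\,\mathrm dv+o(\mathrm dv)$, and pass to the limit. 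The limit is legitimate because between $\mathcal B_\delta(x)$-events the intensity inside the window depends only on the history, which is fixed over that stretch. Intensity changes caused by events outside the window are absorbed because $\lambda$ is conditioned on the full $\mathcal H$.

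Then I would multiply this survival term by the probability $\lambda(x)\,\mathrm dx$ of an event in the infinitesimal cell at $x$ and divide by $\mathrm dx$. Fubini turns $\int_{t_n}^{t}\Lambda(v)\,\mathrm dv$ into $\int_{[t_n,t(x))\times\mathcal B_\delta(x)}\lambda(x')\,\mathrm dx'$, which yields \eqref{eq:cond-b-pdf}.

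The main obstacle is the survival step. The full process may have events outside $\mathcal B_\delta(x)$ during $[t_n,t)$, and these can change $\lambda$ inside the window, for example through Hawkes excitation. So the exponential formula has to be understood with the intensity evaluated along the realized full history. I would handle this with the multiplicative (product-integral) form of the compensator of the restricted counting process, or equivalently the martingale $\exp\bigl(-\int\Lambda\bigr)\mathbf 1\{\text{no event}\}$, and note that \eqref{eq:cond-b-pdf} is then a conditional density given the external events in the window period. This interpretation matches how the statement is used in \eqref{eq:cond-pdf-loc}.
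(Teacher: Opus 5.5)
Your proposal follows the paper's proof essentially step for step: the paper also writes the density as the conditional-intensity factor $\lim \mathbb P(A\mid B)/\mathrm dx=\lambda(x)$ times the probability of no $\mathcal B_\delta(x)$-event on $[t_n,t(x))$. It obtains that probability from the hazard ODE $S'(t)=-\Lambda(t)S(t)$ with $\Lambda(t)=\int_{\mathcal B_\delta(x)}\lambda\,\mathrm ds$, which is the continuous-time form of your product-limit argument (where the paper writes $S(t(x))-S(t_n)$, the ratio $S(t(x))/S(t_n)$ is what is meant).

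Your last paragraph goes beyond the paper, which tacitly treats $\Lambda$ as fixed on $[t_n,t(x))$. Two small corrections there:
\begin{itemize}
\item The relevant local martingale is $\mathbf 1\{\text{no event}\}\exp(+\int\Lambda)$, not $\exp(-\int\Lambda)\mathbf 1\{\text{no event}\}$.
\item Conditioning on the external events reproduces the formula exactly only when window events do not feed back into the intensity outside the window, which fails for a spatial Hawkes kernel. In general the formula is best read as the window-process factor of the full likelihood.
\end{itemize}
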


\begin{proof}[Proof for \Cref{lem:B-cond-pdf}]
    For $x$ to be the next $\mathcal{B}_\delta(x)$-event, three conditions must be satisfied:
    \begin{enumerate}
        \item An event occurs at $x$.
        \item There are no event occurrences within $[t_n, t(x)) \times \mathcal{B}_\delta(x)$.
        \item $s(x) \in \mathcal{B}_\delta(x)$
    \end{enumerate}
    Denote these three events by $A$, $B$, $C$ accordingly, then there is
    \begin{align*}
        & p(x ; \delta) = \lim_{\d x \to 0} \frac{\mathbb{P}(A \cap B \cap C)}{\d x}
        = \\
        & \lim_{\d x \to 0} \frac{\mathbb{P}(A | B )}{\d x} \cdot \mathbb{P}(B) \cdot \mathbbm{1}\{s(x) \in \mathcal{B}_\delta(x)\}, 
    \end{align*}
    where the second equality is by definition of $C$. Note that since $B \subseteq \mathcal{H}_{t(x)}$, the first term is:
    $$
    \lim_{\d x \to 0} \frac{\mathbb{P}(A | B )}{\d x}
    = \lim_{\d x \to 0} \frac{\mathbb{E}[\mathbb{N}( \d x) | \mathcal{H}_{t(x)}]}{\d x} =  \lambda(x).
    $$
    To solve for the second term $\mathbb{P}(B)$, we define the survival function $S(t)$ as
    $$
    S(t) \coloneqq \mathbb{P}(\text{No event in } [0, t) \times \mathcal{B}_\delta(x)),
    $$
    \begin{align*}
        & \lim_{\d t \to 0} \frac{\mathbb{P}(\text{No event in } [t, t + \d t) \times \mathcal{B}_\delta(x))}{\d t} \\
        & \quad \quad =  1 - \Lambda(t) \lim_{\d t \to 0} \d t + \lim_{\d t \to 0} o(\d t)),
    \end{align*}
    where $\Lambda(t)$ is the cumulative hazard defined as
    $
    \Lambda(t) = \int_{\mathcal{B}_\delta(x)} \lambda(x) \d s.
    $
    Then, by definition, there is
    \begin{align*}
        & \lim_{\d t \to 0} S(t + \d t) \\
        & = S(t) \cdot \lim_{\d t \to 0} \frac{\mathbb{P}(\text{No event in } [t, t + \d t) \times \mathcal{B}_\delta(x))}{\d t} \\
        & = S(t) - \lim_{\d t \to 0} \Lambda(t) \cdot S(t) \cdot \d t + \lim_{\d t \to 0} o(\d t).
    \end{align*}
    Therefore, there is the following ordinary differential equation (ODE):
    $$
    \frac{\d}{\d t}S(t) = - \Lambda(t) \cdot S(t), \quad S(0) = 1.
    $$
    Solving for it, and we yield:
    $$
    S(t) = \exp \left( -\int_{0}^t \Lambda(t) \d t\right) = \exp \left( -\int_{[0, t) \times \mathcal{B}_\delta(x)} \lambda(t) \d t\right).
    $$
    Therefore, 
    $$
    \mathbb{P}(B) = S(t(x)) - S(t_n) = \exp \left( -\int_{[t_n, t(x)) \times \mathcal{B}_\delta(x)} \lambda(t) \d t\right),
    $$
    where $t_n$ denotes the time of the last $\mathcal{B}$-event observed prior to $x$.
    Combining the above derivation, we conclude that
    $$
    p(x; \delta) =
    \lambda(x) \exp\left(
    - \int_{[t_n, t(x)) \times \mathcal{B}_\delta(x)}
    \lambda(x')\d x'
    \right).
    $$
    where we used the fact that $s(x) \in \mathcal{B}_\delta(x)$ is always true.
    This finishes the derivation.
\end{proof}

\begin{proof}[Proof of Jaccard Index]
    Define the set dilation and erosion operators as
    \begin{equation}
        \label{eq:dilation}
        \Omega \oplus \mathcal{B} \coloneqq \cup_{s \in \Omega}\{ s' : s' \in \mathcal{B}(s) \}
    \end{equation}
    \begin{equation}
        \label{eq:erosion}
        \Omega \ominus \mathcal{B} \coloneqq \cup_{s \in \mathcal{R}} \{ s  :  \mathcal{B}(s) \subseteq \mathcal{R}\}
    \end{equation}
    and define the following set of sets
    $$
    \mathcal{I} = \left\{ \Omega' : (\Omega \ominus \mathcal{B}) \subseteq \Omega' \subseteq (\Omega \oplus \mathcal{B}) \right\}. 
    $$
    Define the \emph{definitely post-change} and \emph{definitely pre-change} regions as
    \[
    \Omega_{\mathrm{in}} \coloneqq \Omega \ominus B_\delta
    = \{s \in S : B_\delta(s) \subseteq \Omega\},\]
    \[
    \Omega_{\mathrm{out}} \coloneqq (\Omega \oplus B_\delta)^c
    = \{s \in S : B_\delta(s) \subseteq \Omega^c\}.
    \]
    By construction, events whose spatial coordinates lie in $\Omega_{\mathrm{in}}$ ($\Omega_{\mathrm{out}}$) are generated entirely under the post-change (pre-change) regime within their local neighborhoods. Under the properness of the score statistic (\Cref{thm:drift}),
    \[
    \mathbb E_\tau[\Delta(x) \mid s(x) \in \Omega_{\mathrm{in}}] > 0,
    \quad
    \mathbb E_\tau[\Delta(x) \mid s(x) \in \Omega_{\mathrm{out}}] < 0.
    \]
    By ergodicity of the post-change point process and the law of large numbers (\Cref{ass:lln}), for any measurable set $A \subseteq S$, as $t\to\infty$,
    \begin{equation}
        \label{eq:lln-jaccard}
        \frac{1}{t}\int_{[0,t)\times A} \Delta(x) \d \mathbb N(x)
        \xrightarrow{\mathrm{a.s.}}
        \bar \lambda_1 |A|  \mathbb E_\tau[\Delta(x)| s(x)\in A],
    \end{equation}
    
    Suppose, for contradiction, that the maximizer $\hat\Omega$ includes a subset
    $A \subseteq \Omega_{\mathrm{out}}$ with $|A|>0$. Then, according to \eqref{eq:lln-jaccard}, the contribution from
    $[0,t)\times A$ is asymptotically negative, and removing $A$ strictly increases
    the accumulated statistic for all sufficiently large $t$, contradicting the
    optimality of $\hat\Omega$. Similarly, if $\hat\Omega$ excludes a subset
    $A \subseteq \Omega_{\mathrm{in}}$ with $|A|>0$, then adding $A$ contributes
    positive drift and strictly increases the statistic, again contradicting
    optimality. Therefore, with probability tending to one as $\gamma\to\infty$,
    \[
    \Omega \ominus B_\delta \subseteq \hat\Omega \subseteq \Omega \oplus B_\delta,
    \]
    which is equivalent to $\hat \Omega \in \mathcal I$.
    Therefore, we conclude that
    \begin{align*}
        & \lim_{\gamma \to +\infty}J(\hat \Omega, \Omega)
        \geq
        \lim_{\gamma \to +\infty} \inf_{\Omega' \subseteq \mathcal I}J(\Omega', \Omega) \\
        & \geq \min \left\{ \frac{|\Omega|}{|\Omega \oplus \mathcal{B}_\delta|}, \right. \left. \frac{|\Omega \ominus \mathcal{B}_\delta|}{|\Omega|} \right\}.
    \end{align*}
\end{proof}

\subsection{Additional Experiment Details}
\label{app:exp}

\begin{algorithm}[t]
\caption{Online Algorithm}
\label{alg:online}
\begin{algorithmic}[1]
    \REQUIRE Reference dataset  $\mathcal{D}_{0}$;
    Test dataset $\mathcal{D}$;
    Current time $t$;
    Optimization epochs $K$;
    Threshold $\gamma$;
    Learning rate $\eta$.
    \STATE  Train $f_{0}$.
    \STATE $f_1 \leftarrow f_0$ Initialize post-change model . 
    \STATE $\hat \nu^{(0)} \leftarrow 0$.
    \FOR{$x_j \in \mathcal{D}$}
        \STATE Compute $\psi_0(x_j)$ and $\psi_1(x_j)$ as defined in \eqref{eq:hyvarinen}.
        \STATE $\Delta(x_j) \leftarrow \psi_0(x_j) - \psi_1(x_j)$.
        \STATE $\hat \nu^{(0)}_j \leftarrow 0$.
        \FOR{$k \in \{ 1, \ldots, K \}$}
            \STATE $\hat \Omega^{(k)}_j \leftarrow$ Compute change region as in \eqref{eq:est-R}.
            \STATE $\hat \tau^{(k)}_j \leftarrow$ Compute change-point as in \eqref{eq:est-nu}.
        \ENDFOR
        \STATE Update $f_1$ via gradient descend as in \eqref{eq:grad-dec}.
    \ENDFOR
    \STATE Set $\hat \tau \leftarrow \hat \tau^{(k)}$ and $\hat \Omega \leftarrow \hat \Omega^{(k)}$.
    \STATE Compute spatio-temporal CUSUM statistic:
    $$
    W_t \leftarrow \int_{[\hat \tau, t) \times \hat \Omega} \Delta(x) \d \mathbb{N}(x).
    $$
    \IF{$W_t \geq \gamma$}
        \STATE $\nu \leftarrow t$ Set stopping time.
    \ENDIF 
    \RETURN Estimation $(\hat \tau, \hat \Omega)$; Stopping time $\nu$.
\end{algorithmic}
\end{algorithm}

\begin{figure}[t]
    \centering
    \includegraphics[width=0.8\linewidth]{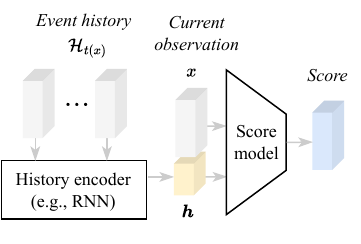}
    \caption{Architecture of the score model with conditional history information. A history encoder maps the given event history into a latent vector $\boldsymbol h$, which is concatenated with the current observed event $x$, and fed into a feed-forward neural network to produce the final score estimate.
    }
    \label{fig:architecture}
\end{figure}


\subsubsection{Threshold Simulation}
\label{app:sim}

The threshold \(\gamma\) is typically chosen to balance the trade-off between the probabilities of false alarms and successful detection, as determined by a target ARL value. To reduce computational effort in determining thresholds for large target ARL values, we employ an efficient approximation algorithm. This algorithm leverages the fact that the stopping time \(\nu\) under the pre-change regime approximately follows an exponential distribution when the ARL is large. Such approximation methods are widely utilized in online change detection.

The high-level idea of the procedure is that, instead of simulating the mean of the distribution of $T:= \inf\{t: W_t \geq \tau\}$ directly, we obtain an estimate of the mean from an estimate of the cumulative distribution function of $T$ based on $N_1$ iterations.
Specifically, in each iteration, we simulate the pre-change trajectory with $N_2$ time steps, and compute the maximum of the detection statistics at these $N_2$ time steps. These maximum values under $N_1$ iterations are then denoted as $W_{1,\text{max}}, W_{2,\text{max}}, \ldots, W_{N_1,\text{max}}$. For the desired ARL values $\gamma$, we approximate the stopping time $T$ as an exponential distribution with mean $\gamma$. Thus we have $P(W_{\text{max}} < \gamma ) = P(\nu > N_2)  \approx e^{-N_2/\gamma}$. Thus the corresponding threshold $\tau$ can be approximated as the $e^{-N_2/\gamma}$ quantile of the set \(\{W_{1,\text{max}}, W_{2,\text{max}}, \ldots, W_{N_1,\text{max}}\}\). 

Note that to adapt to the point process setting, we make an additional modification here by replacing the number of previous timesteps with the number of observed events when determining the significance level of the quantile, as suggested in \cite{wang2023sequential}.

\paragraph{Score model architecture}

\Cref{fig:architecture} illustrates the architecture used to parameterize the localized score model $f_i(\cdot;\delta)$.
Each observed event $x = (t, s)$ is first represented by its local spatio-temporal context, consisting of the inter-arrival time $t - t_n$, the spatial coordinates $s$, and the set of historical events falling within the spatial neighborhood $B_\delta(x)$.
These locally preceding events are sequentially encoded by a recurrent neural network (RNN), which summarizes the truncated event history into a fixed-dimensional latent representation capturing local temporal dependence and self-excitation effects.
The resulting hidden state is then concatenated with features of the current event $x$ and passed through a feed-forward neural network that outputs an estimate of the score function $\nabla_x \log p_i(x;\delta)$.
This design follows \cite{dong2025conditional}.
In our experiments, we parameterized it by: 
($i$) a one-layer LSTM network with $32$ hidden units, which encodes event history, followed by
($ii$) a one-layer feed-forward network with $512$ hidden units, which takes as input the concatenation of the history embedding and the current observed event, outputting the score.

\end{document}